  \providecommand\BibTeX{{%
    \normalfont B\kern-0.5em{\scshape i\kern-0.25em b}\kern-0.8em\TeX}}}
\begin{document}
\fancyhead{}
%%
%% The "title" command has an optional parameter,
%% allowing the author to define a "short title" to be used in page headers.
\title{INMO: A Model-Agnostic and Scalable Module for Inductive Collaborative Filtering}

%%
%% The "author" command and its associated commands are used to define
%% the authors and their affiliations.
%% Of note is the shared affiliation of the first two authors, and the
%% "authornote" and "authornotemark" commands
%% used to denote shared contribution to the research.
%\author{Yunfan Wu$^{1,3}$, Qi Cao$^{1,*}$, Huawei Shen$^{1,3,*}$, Shuchang Tao$^{1,3}$, Xueqi Cheng$^{2,3}$}
%\affiliation{
%  \institution{$^{1}$Data Intelligence System Research Center,  \\
%  Institute of Computing Technology, Chinese Academy of Sciences, Beijing, China\\
%  $^{2}$CAS Key Laboratory of Network Data Science and Technology,}
%  \country{Institute of Computing Technology, Chinese Academy of Sciences, Beijing, China\\
%  $^{3}$University of Chinese Academy of Sciences, Beijing, China
%}
%}
%\email{{wuyunfan19b,caoqi,shenhuawei,taoshuchang18z,cxq}@ict.ac.cn}
\author{Yunfan Wu}
\affiliation{%
  \institution{Data Intelligence System Research Center, Institute of Computing}
  \country{Technology, CAS}
}
\affiliation{%
  \institution{University of Chinese Academy of}
  \country{Sciences, Beijing, China}
}
\email{wuyunfan19b@ict.ac.cn}

\author{Qi Cao}
\authornotemark[1]
\affiliation{%
  \institution{Data Intelligence System Research Center, Institute of Computing}
  \country{Technology, CAS}
}
\email{caoqi@ict.ac.cn}

\author{Huawei Shen}
\authornotemark[1]
\affiliation{%
  \institution{Data Intelligence System Research Center, Institute of Computing}
  \country{Technology, CAS}
}
\affiliation{%
  \institution{University of Chinese Academy of}
  \country{Sciences, Beijing, China}
}
\email{shenhuawei@ict.ac.cn}

\author{Shuchang Tao}
\affiliation{%
  \institution{Data Intelligence System Research Center, Institute of Computing}
  \country{Technology, CAS}
}
\affiliation{%
  \institution{University of Chinese Academy of}
  \country{Sciences, Beijing, China}
}
\email{taoshuchang18z@ict.ac.cn}

\author{Xueqi Cheng}
\affiliation{%
  \institution{CAS Key Lab of Network Data Science and Technology, Institute of}
  \country{Computing Technology, CAS}
}
\affiliation{%
  \institution{University of Chinese Academy of}
  \country{Sciences, Beijing, China}
}
\email{cxq@ict.ac.cn}
%%
%% By default, the full list of authors will be used in the page
%% headers. Often, this list is too long, and will overlap
%% other information printed in the page headers. This command allows
%% the author to define a more concise list
%% of authors' names for this purpose.
\renewcommand{\shortauthors}{Wu et al.}

%%
%% The abstract is a short summary of the work to be presented in the
%% article.
\begin{abstract}
  Collaborative filtering is one of the most common scenarios and 
  popular research topics in recommender systems.
  Among existing methods, latent factor models, i.e., learning a specific embedding for each user/item by 
  reconstructing the observed interaction matrix, have shown excellent performances.
  However, such user-specific and item-specific embeddings are intrinsically transductive, making it difficult to 
  deal with new users and new items unseen during training.
  Besides, the number of model parameters heavily depends on the number of all users and items, 
  restricting its scalability to real-world applications.
  To solve the above challenges, in this paper, we propose a novel model-agnostic and scalable 
  \textbf{In}ductive Embedding \textbf{Mo}dule for collaborative filtering, namely INMO. 
  INMO generates the inductive embeddings for users (items) by characterizing their 
  interactions with some template items (template users), instead of employing an embedding lookup table. 
  Under the theoretical analysis, we further propose an effective indicator for the selection of template users/items.
  Our proposed INMO can be attached to existing latent factor models as a pre-module, inheriting the expressiveness of backbone models, 
  while bringing the inductive ability and reducing model parameters. 
  % In practice, with the effective selection indicator, we could select only $30\%$ of all users/items as the representative users/items, 
  % which retains a similar recommendation performance but requires much fewer model parameters.
  We validate the generality of INMO by attaching it to both Matrix Factorization (MF) 
  and LightGCN, which are two representative latent factor models for collaborative filtering. 
  Extensive experiments on three public benchmarks demonstrate the effectiveness and efficiency of INMO 
  in both transductive and inductive recommendation scenarios.
  %Our implementations are available here in PyTorch\footnote{\url{https://anonymous.4open.science/r/INMO-LGCN_cf/}}.
   \let\thefootnote\relax\footnotetext{*Corresponding Authors.}
\end{abstract}

%%
%% The code below is generated by the tool at http://dl.acm.org/ccs.cfm.
%% Please copy and paste the code instead of the example below.
%%
\begin{CCSXML}
  <ccs2012>
     <concept>
         <concept_id>10002951.10003317.10003347.10003350</concept_id>
         <concept_desc>Information systems~Recommender systems</concept_desc>
         <concept_significance>500</concept_significance>
         </concept>
   </ccs2012>
\end{CCSXML}
  
\ccsdesc[500]{Information systems~Recommender systems}

%%
%% Keywords. The author(s) should pick words that accurately describe
%% the work being presented. Separate the keywords with commas.
%% the work being presented. Separate the keywords with commas.
\keywords{Recommender System, Collaborative Filtering, Inductive Embedding, Latent Factor Model}

%% A "teaser" image appears between the author and affiliation
%% information and the body of the document, and typically spans the
%% page.

%%
%% This command processes the author and affiliation and title
%% information and builds the first part of the formatted document.
\maketitle
%{\fontsize{8pt}{8pt} \selectfont
%\textbf{ACM Reference Format:}\\
%Yunfan Wu, Qi Cao, Huawei Shen, Shuchang Tao, Xueqi Cheng. 2022. 
%INMO: A Model-Agnostic and Scalable Module for Inductive Collaborative Filtering.
%In {\it Proceedings of the 45th International ACM SIGIR Conference on Research and Development in Information Retrieval
%(SIGIR'22), July 11–15, 2022, Madrid, Spain.} ACM, New York, NY, USA, 11 pages. https://doi.org/10.1145/3477495.3\\532000}

\section{Introduction}
Recommender systems are prevalently deployed in real-world applications to provide personalized recommendation services, 
helping people out of the dilemma of information overload \cite{goldberg1992using, covington2016deep, ying2018graph}. 
Among various recommendation tasks, collaborative filtering (CF) is one of the most simple and widely adopted scenarios, 
which has attracted extensive research attention for more than two decades \cite{su2009survey}.
%While early literature on collaborative filtering has mainly focused on the explicit rating prediction \cite{koren2009matrix}, recent works care more about the top-k recommendation based on implicit interactions \cite{he2017neural, Wang2020SIGIR, he2020lightgcn}, which is also the focus of this paper.
\begin{figure}[t]
  \centering
  \includegraphics[width=\linewidth]{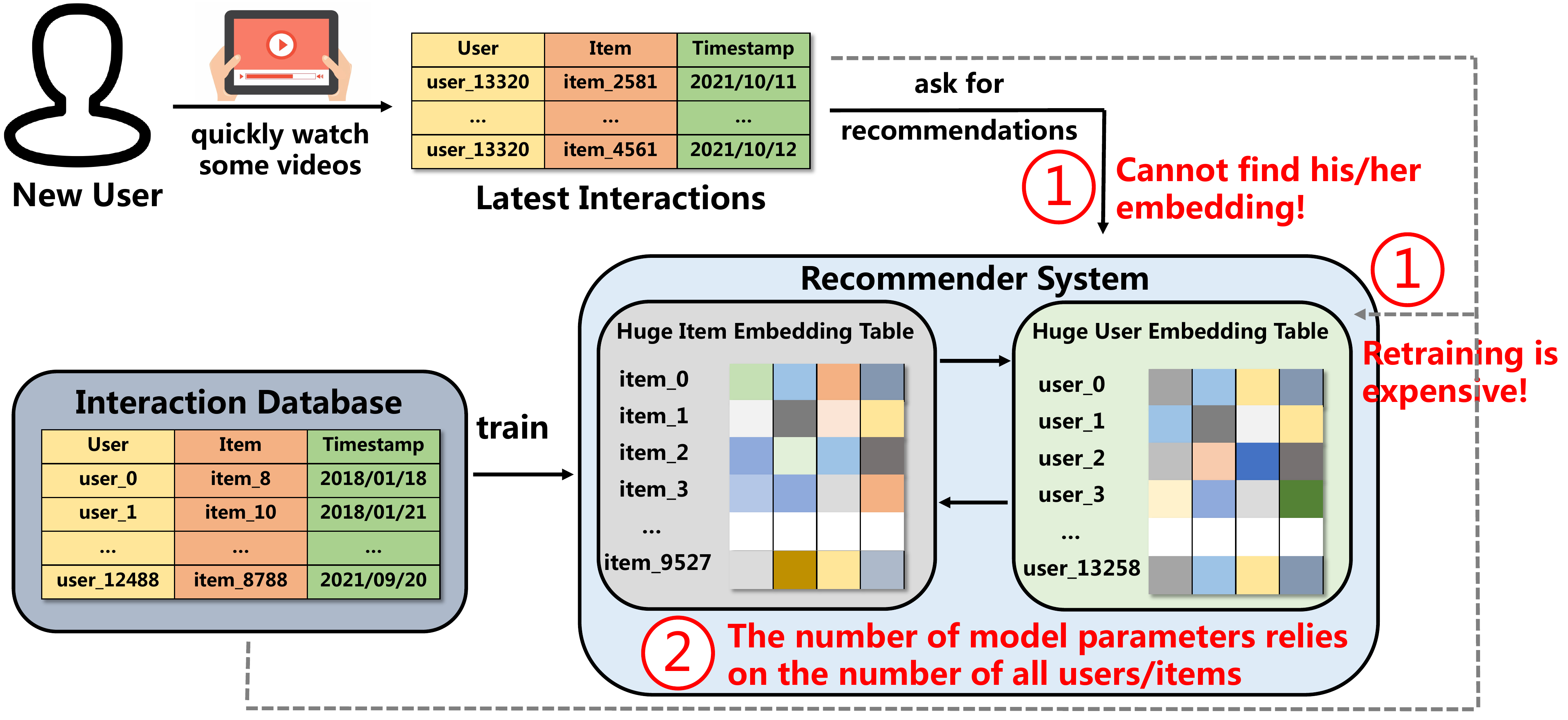}
  \caption{Limitations of existing latent factor models for collaborative filtering.}
  \Description{A newly registered user cannot obtain his/her recommendation list from a trained recommender.}
  \label{intro}
  \vspace{-0.5cm} 
\end{figure}

Among existing methods, latent factor models have been the state-of-the-art in CF for over a decade 
\cite{hofmann1999latent, koren2008factorization}.
Since high-quality side information is not always available~\cite{pmlrwu21j}, 
the most general paradigm of latent factor models is to project the ID of a user (item) to a specific learnable embedding, 
and then predict user-item interactions based on these embeddings~\cite{koren2009matrix}. 
Recently, with the success of deep learning, researchers further improve latent factor models from two aspects, 
i.e., representation generation \cite{wang2015collaborative, fan2019graph, wang2019kgat, BergKDD18} and interaction modeling \cite{he2017neural, he2018outer}.
The former line generates more informative representations based on initial user/item embeddings, 
e.g., utilizing graph neural networks to capture the high-order proximity~\cite{BergKDD18}.
The latter line devotes to enhancing the interaction modeling between users and items with powerful neural networks 
instead of a simple inner production, e.g., employing multi-layer perceptrons to learn the complex interaction function~\cite{he2017neural}.

However, all the above latent factor models rely on the user-specific (or item-specific) embedding learning, 
which have two critical limitations when dealing with real-world recommendation scenarios (shown in Figure \ref{intro}).
First, such methods are intrinsically transductive, making them hard to deal with continuous new users and new items.
For example, there may be newly registered Youtube users who quickly watch some preferred videos \cite{MuhanICLR20} and it is a practical need to make
personal recommendations for them according to such latest behaviors.
Unfortunately, with the embedding lookup paradigm, these methods cannot find the embeddings for new users or new items 
which are unseen during training, while the cost of retraining or incremental learning for new users/items 
is generally expensive~\cite{rendle08online, incre2016yu}.
Moreover, with the embedding lookup table, the number of model parameters heavily depends on the number of all users and items, 
restricting their scalability to real-world applications with hundreds of millions of users and items.

Recently, a few works have also noticed the limitations of the transductive nature of existing latent factor models and attempted to 
propose inductive collaborative filtering methods without side information \cite{JasonICML18, MuhanICLR20, pmlrwu21j, shen2021inductive, ragesh2021user}.
\citet{ying2018graph}. 
However, they either need an expensive computational cost~\cite{JasonICML18, MuhanICLR20, pmlrwu21j} or have a limited recommendation accuracy 
for achieving the inductiveness \cite{shen2021inductive}.
For example, \citet{pmlrwu21j} present a two-stage framework (IDCF) to learn the 
latent relational graph among existing users and new users, which requires a quadratic complexity.
\citet{shen2021inductive} use a global average embedding for new users, 
losing the personalization of recommendations for achieving the inductiveness.
To sum up, there still lacks a both efficient and effective inductive collaborative filtering method.

In this paper, we formally define the inductive recommendation scenarios and 
address the aforementioned problems by proposing a novel \textbf{In}ductive Embedding \textbf{Mo}dule (INMO) for collaborative filtering. 
Specifically, INMO generates the inductive embedding of a user by considering its past interactions 
with a set of template items (vice versa), instead of learning a specific embedding for each user and item. 
As long as a new user (item) has interacted with the preselected template items (users), INMO could generate 
an informative embedding for the new user (item).
Besides, the number of parameters in INMO only depends on the number of template users and template items, 
which is adjustable according to available computing resources, contributing to its better scalability to real-world applications.
Under the theoretical analysis, we further propose an effective indicator for the selection of template users/items, making it possible for INMO to achieve competitive recommendation performances with much fewer model parameters.

Remarkably, our proposed INMO is model-agnostic, which can be easily attached to all existing latent factor CF methods 
as a pre-module, inheriting the expressiveness of backbone models, while bringing the inductive ability and reducing model parameters. 
We experiment INMO with the representative Matrix Factorization (MF) \cite{koren2009matrix} 
and the state-of-the-art LightGCN \cite{he2020lightgcn} to show its generality.
Extensive experiments conducted on three public benchmark datasets, across both transductive and inductive recommendation scenarios,
demonstrate the effectiveness of our proposed INMO.

% To summarize, our work makes the following main contributions:
To summarize, our contributions are as follows:
\begin{itemize}[leftmargin=*]
  \item We formally define two inductive recommendation scenarios to progressively evaluate the inductive ability for CF methods.
  \item We propose a novel \textbf{In}ductive Embedding \textbf{Mo}dule (INMO), which is applicable to existing latent factor models, 
  bringing the inductive ability and reducing model parameters.
  \item Extensive experiments conducted on three real-world datasets across both transductive and inductive recommendation scenarios 
  demonstrate the effectiveness and generality of INMO.
\end{itemize}

\section{Related Work}
This section briefly reviews existing works on latent factor collaborative filtering methods and discusses several recent inductive
recommenders, which are the most relevant to this work. 

\subsection{Latent Factor CF Methods}
Latent factor models have been the state-of-the-art in collaborative filtering for over a decade \cite{koren2008factorization}.
These models learn vectorized embeddings for users and items by reconstructing the original user-item interaction data 
\cite{hofmann1999latent, koren2008factorization}.
From a systematic view, most existing latent factor models have two key components, 
i.e., representation generation and interaction modeling.

To improve the representation generation, many methods have been proposed to incorporate the external side information, 
like item attributes~\cite{wang2015collaborative, chen2017attentive}, social networks~\cite{ma2008sorec, fan2019graph}, 
knowledge graphs~\cite{palumbo2017entity2rec, wang2019kgat}, etc.
However, high-quality side information is not always available in real-world applications.
Recently, with the success of graph neural networks (GNNs) in various fields~\cite{cao2020pop, xu2019graph}, they have also been introduced into the collaborative filtering task \cite{BergKDD18, zheng2018spectral, ying2018graph, wang2019neural, he2020lightgcn}.
These GNN-based methods could generate more comprehensive representations for users and items, 
capturing their high-order relationships by iteratively aggregating neighbor information in the user-item interaction graph. 
Among them, LightGCN \cite{he2020lightgcn} is a light but effective CF model, achieving the state-of-the-art performance.

As for interaction modeling, while the inner product is a commonly adopted and efficient choice~\cite{koren2009matrix, rendle2020neural}, the
linearity makes it insufficient to reveal the complex and nonlinear interactions between users and items \cite{he2017neural}.
Some variants of MF \cite{koren2009matrix, koren2008factorization} add several bias terms to the inner product 
for better preference modeling.
\citet{hsieh2017collaborative} employ the Euclidean distance instead of the inner product to estimate the similarities between users and items.
\citet{he2017neural} propose NeuMF, introducing a multi-layer perceptron (MLP) to learn the highly non-linear user-item interaction function.
Following NeuMF, \citet{he2018outer} use a convolutional neural network, modeling high-order correlations between representations.

However, despite the great success of latent factor models in CF,
%most existing latent factor models transforms the one-hot index to 
%vector representations in the representation generation, while the interaction modeling serves as a similarity function on the representations.
%However, 
most existing methods rely on the embedding lookup table for user/item-specific embeddings, 
making these models intrinsically transductive and limiting their scalability.

\begin{figure}[t]
  \centering
  \includegraphics[width=\linewidth]{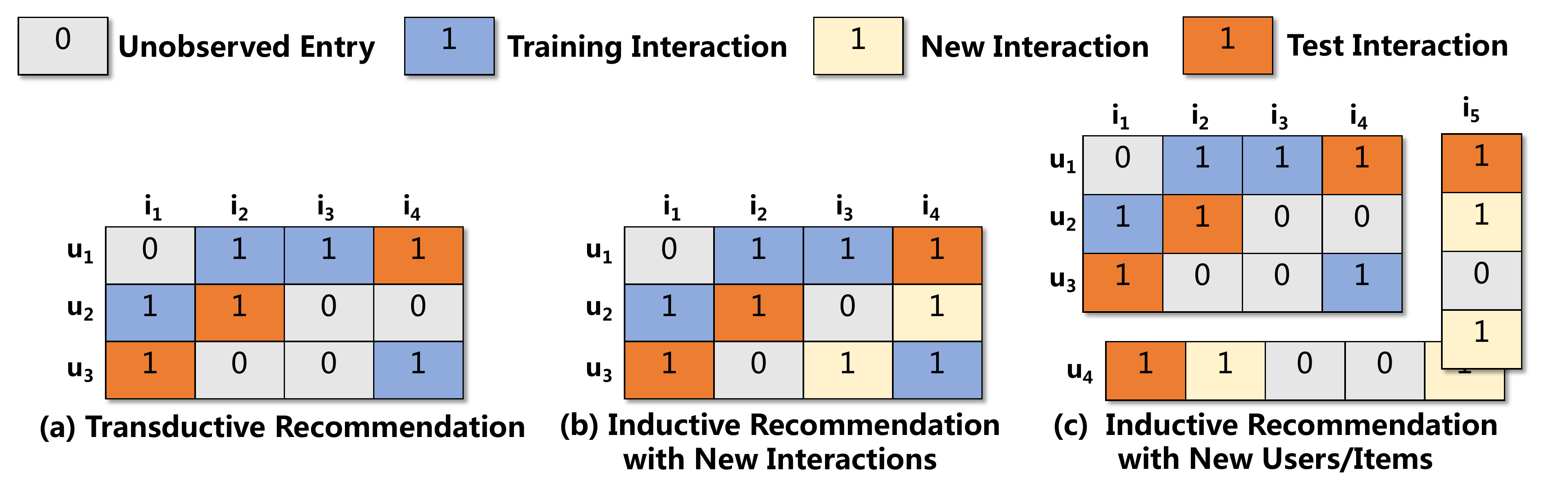}
  \caption{Traditional transductive recommendation scenario and two inductive scenarios proposed in this paper.}
  \Description{Three example interaction matrices denoting the recommendation scenarios.}
  \label{settings}
\end{figure}

\subsection{Inductive CF Methods}
In real-world applications, there are always new interactions, as well as newly registered users and items.
%existing users continuously make new interactions with items, and 
Practical recommender systems need to be periodically retrained to refresh the models with the new data \cite{zhang2020retrain}.
Though some works have discussed incremental and online recommenders to incorporate new users/items \cite{rendle08online, incre2016yu}, 
they still bring an additional training cost or incremental error.
As a result, it is a valuable property for the recommenders to possess the inductive ability, i.e., 
directly characterizing additional new interactions and new users/items.

It is worth noting that, the inductive scenario is different from the \emph{cold-start problem}. 
The cold-start methods focus on improving recommendations for users with few or no interactions \cite{lee2019melu, Lin00PCW21}. 
In contrast, the inductive scenario requires the recommenders to incorporate the new data and update their predictions without the need of retraining. 
An inductive recommender could predict the changeful preference of an existing user or a completely new user, according to its latest behaviors. 
In a word, the inductiveness discussed in this paper concentrates on the dynamic modeling capacity, instead of improving the experiences for long-tailed users.

Recently, a few works have discussed recommendations in the inductive scenarios
\cite{ying2018graph, JasonICML18, MuhanICLR20, pmlrwu21j, shen2021inductive, ragesh2021user}.
\citet{ying2018graph} propose PinSage, adopting an inductive variant of GNNs to make recommendations for 
an online content discovery service Pinterest. 
It leverages the visual and annotation features of pins as inputs to achieve inductiveness.
However, high-quality side information is inadequate in many situations \cite{MuhanICLR20}.
\citet{JasonICML18} study the matrix completion in an inductive scenario, proposing an exchangeable matrix layer to 
do the message passing between interactions and use the numerical ratings as input features.
But it is neither time-efficient nor can be applied to the implicit feedback data without real-valued ratings.
\citet{MuhanICLR20} (IGMC) consider the rating prediction as a graph-level regression task.
They define some heuristic node features and predict the rating of a user-item pair by learning the local graph pattern.
Despite its inductive ability, IGMC has to do subgraph extraction and graph regression for every user-item pair independently, 
resulting in an unaffordable time for the top-k recommendation task.
Besides, \citet{pmlrwu21j} present a two-stage framework to estimate the relations from key users to query users, 
which takes a quadratic complexity.
A very recent work IMC-GAE \cite{shen2021inductive} employs a postprocessing method while losing the personalization of new users.

To conclude, existing inductive CF methods are either time-consuming or have limited recommendation accuracy. 
There still lacks a both efficient and effective inductive method for the collaborative filtering task.

%\vspace{-5pt}
\section{Preliminaries}
In this section, with a brief review of the commonly studied transductive recommendation task, 
we first propose and formalize two inductive recommendation tasks.
%and provide formal definitions for them. 
Afterward, we introduce two representative latent factor methods for CF, i.e.,  
MF \cite{koren2009matrix} and LightGCN \cite{he2020lightgcn}, which are the backbone models to 
apply INMO for experiments.
%%\vspace{-5pt}

\subsection{Transductive and Inductive CF Scenarios}
\label{sec:settings}
Existing researches generally evaluate the performance of recommender systems in a transductive scenario. 
Specifically, they select a part of observed interactions from each user to serve as the training data, 
and treat the remaining interactions as the test data (Figure \ref{settings}(a)).
All users and items are assumed to have been seen during training. 
However, in a real-world recommendation service, there are always newly registered users and newly created items, 
as well as new interactions between existing users and items.
Such new users/items and new interactions emerge continuously and have not been seen during training. 
%For example, a newly registered Youtube user may quickly watch some videos as his or her latest interactions. 
As a result, it is necessary to evaluate recommender systems from an inductive view, 
i.e., the ability to make recommendations for new users/items with new interactions that are unseen during the training phase.
\begin{figure*}[ht]
  \centering
  \includegraphics[width=0.9\linewidth]{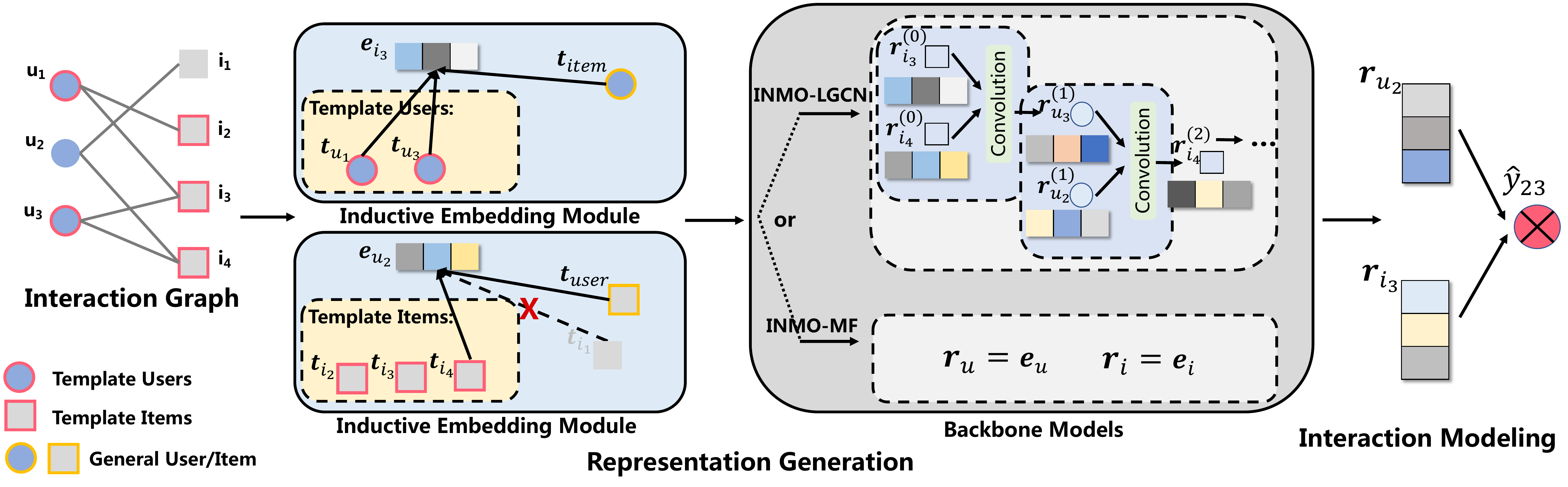}
  \caption{The overall architecture of an inductive CF recommender, which consists of two key components, 
  i.e., representation generation and interaction modeling. Here, we use INMO implemented with MF (INMO-MF) and LightGCN (INMO-LGCN) as two specific instances. 
  The template users (template items) are some carefully selected users (items) to serve as the bases to characterize every item (user) in the embedding space.}
  \Description{An example user-item interaction graph followed by two components of an inductive latent factor recommendation model.}
  \label{model}
\end{figure*}

In this work, we propose two specific scenario settings of inductive recommendations,
which could better evaluate the dynamic modeling capability of recommender systems. 
The first is the \textbf{inductive recommendation with new interactions}, 
where some additional new interactions between existing users and items are observed after training, see Figure \ref{settings}(b).
Formally, $\mathcal{N}_\mathsf{u}$ denotes the set of interacted items of user $\mathsf{u}$, 
and $\mathcal{N}_\mathsf{i}$ denotes the set of interacted users of item $\mathsf{i}$.
In the test phase, the extended interaction sets 
$\mathcal{N}_\mathsf{u}^{test-ob} = \mathcal{N}_\mathsf{u}^{train} \cup  \mathcal{N}_\mathsf{u}^{new}, 
\mathcal{N}_\mathsf{i}^{test-ob} = \mathcal{N}_\mathsf{i}^{train} \cup \mathcal{N}_\mathsf{i}^{new}$ are observed for each user and item.
In this scenario, it requires the recommenders to flexibly incorporate new interactions and predict the updated user preference without retraining.

The second scenario is the \textbf{inductive recommendation with new users/items}, where some new users $U^{new}$ and new items $I^{new}$ are 
created after the training phase.
This scenario is different from the extreme cold start problem \cite{MuhanICLR20}, since 
the new users and new items should have at least some observed interactions at the test phase, 
according to which the models could make accurate recommendations (Figure \ref{settings}(c)).
Formally, 

\begin{footnotesize}
\begin{align*}
U^{test} = U^{train} \cup U^{new}&,\ I^{test} = I^{train} \cup I^{new}; \\
\forall \mathsf{u} \in U^{train}, \ \mathcal{N}_\mathsf{u}^{test-ob} = \mathcal{N}_\mathsf{u}^{train} \cup \mathcal{N}_\mathsf{u}^{new}&;\ 
\forall \mathsf{u} \in U^{new}, \ \mathcal{N}_\mathsf{u}^{test-ob} = \mathcal{N}_\mathsf{u}^{new}; \\
\forall \mathsf{i} \in I^{train}, \ \mathcal{N}_\mathsf{i}^{test-ob} = \mathcal{N}_\mathsf{i}^{train} \cup \mathcal{N}_\mathsf{i}^{new}&;\ 
\forall \mathsf{i} \in I^{new}, \ \mathcal{N}_\mathsf{i}^{test-ob} = \mathcal{N}_\mathsf{i}^{new}.
\end{align*}
\end{footnotesize}
\noindent The inductive evaluation in this scenario expects the recommenders to accurately recommend items for new users
and recommend new items to users, which is a prevalent need in practice. 

\subsection{MF and LightGCN}
Let $U = \{\mathsf{u}_1, \mathsf{u}_2, \cdots, \mathsf{u}_n\}$ and $I = \{\mathsf{i}_1, \mathsf{i}_2, \cdots, \mathsf{i}_m\}$ 
denote the set of users and items in a recommendation system.
Matrix Factorization (MF)~\cite{koren2009matrix} is the most basic latent factor model, which directly obtains the final 
representations of users and items from an embedding lookup table,
\begin{equation}
  \label{eq:mf1}
  \boldsymbol{r}_\mathsf{u}=\boldsymbol{e}_\mathsf{u}, \ \boldsymbol{r}_\mathsf{i}=\boldsymbol{e}_\mathsf{i}. 
\end{equation}
Here, $\boldsymbol{e}_{\mathsf{u}}, \boldsymbol{e}_{\mathsf{i}} \in \mathbb{R}^d$ are the embeddings of user $\mathsf{u}$ and item $\mathsf{i}$ 
through an embedding lookup table, and $\boldsymbol{r}_{\mathsf{u}}, \boldsymbol{r}_{\mathsf{i}}$ are their final representations.

LightGCN \cite{he2020lightgcn} is a state-of-the-art CF recommender. After obtaining the initial representations 
$\boldsymbol{r}_\mathsf{u}^{(0)}=\boldsymbol{e}_\mathsf{u},\ \boldsymbol{r}_\mathsf{i}^{(0)}=\boldsymbol{e}_\mathsf{i}$, 
it leverages a linear GNN to refine representations by iteratively aggregating neighbor information in the user-item interaction graph, i.e.,
\begin{equation}  \label{eq:aggregation}
   \boldsymbol{r}_{\mathsf{u}}^{(l+1)} = AGG(\{\boldsymbol{r}_{\mathsf{i}}^{(l)}: \mathsf{i} \in \mathcal{N}_\mathsf{u}\})~,
 \end{equation}
\noindent where $\boldsymbol{r}_\mathsf{u}^{(l)}$ is the representation of user $\mathsf{u}$ at the $l$-th layer.
%$\mathcal{N}_\mathsf{u}$ is the set of interacted items of user $\mathsf{u}$.
The final representations are obtained by an average of all layers, i.e,
$\boldsymbol{r}_{\mathsf{u}} = \frac{1}{K+1} \sum_{l=0}^{K}\boldsymbol{r}_{\mathsf{u}}^{(l)},\ 
\boldsymbol{r}_{\mathsf{i}} = \frac{1}{K+1} \sum_{l=0}^{K}\boldsymbol{r}_{\mathsf{i}}^{(l)}$.

Both of MF and LightGCN employ a simple inner product to predict the preference score $\hat{y}_{ij}$ between user $\mathsf{u}_i$ and item $\mathsf{i}_j$, 
based on their final representations, i.e., $\hat{y}_{ij} = \boldsymbol{r}_{\mathsf{u}_i}^T \boldsymbol{r}_{\mathsf{i}_j}$.

To attach our proposed INMO to MF or LightGCN, we 
just replace the transductive embedding lookup table with INMO which can generate the embeddings $\boldsymbol{e}_{\mathsf{u}}$ and $\boldsymbol{e}_{\mathsf{i}}$ in an inductive manner.

\section{Methodology}
This section first introduces our intuitions behind the design of INMO, which provides both theoretical and empirical analyses.
Then, we propose the novel \textbf{In}ductive Embedding \textbf{Mo}dule, to make recommendations in the inductive scenarios with an adjustable number of parameters.
Lastly, an additional self-enhanced loss and two training techniques are presented for model optimization.
The overall architecture of an inductive CF recommender with INMO is illustrated in Figure \ref{model}.

\subsection{Theoretical Analysis}
\label{sec:th}
In the classic collaborative filtering setting without any side information, most existing latent factor models leverage an 
embedding lookup table, mapping the one-hot index of a user/item to an embedding vector. 
However, such an embedding lookup table is intrinsically transductive and brings the scalability difficulty.
In this work, we aim to propose a scalable inductive embedding module, which could inductively generate the
embeddings for new users and new items.

Before diving into our proposed method, let us first review the fundamental assumption of CF, i.e., 
if two users have similar past interactions with items, they will act on other items similarly in the future~\cite{su2009survey, ning2011SLIM, he2018nais, sarwar2001item}.
Based on such assumption, in this paper, we propose to design the inductive embedding module for users (items) 
by considering their past interactions with some carefully selected template items (template users).
Such template items (template users) serve as a set of bases, the combination of which could represent different user preferences (item characteristics).

Let $U_{tem}$, $I_{tem}$ denote the sets of template users and template items, 
and $\boldsymbol{T}_{\mathsf{u}} \in \mathbb{R}^{n_t \times d},\boldsymbol{T}_{\mathsf{i}} \in \mathbb{R}^{m_t \times d}$ 
denote the template user vectors and template item vectors, where $n_t=|U_{tem}|, m_t=|I_{tem}|$.
We expect to design two inductive functions $f_{\mathsf{u}}$ and $f_{\mathsf{i}}$, to generate the embeddings for users and items 
according to their interactions with the template items/users, i.e., 
$\boldsymbol{e}_{\mathsf{u}} = f_{\mathsf{u}}(\mathcal{N}_{\mathsf{u}} \cap I_{tem}, \boldsymbol{T}_{\mathsf{i}})$ and 
$\boldsymbol{e}_{\mathsf{i}} = f_{\mathsf{i}}(\mathcal{N}_{\mathsf{i}} \cap U_{tem}, \boldsymbol{T}_{\mathsf{u}})$.
Considering the recent finding that the nonlinear transformation adopted by neural networks is burdensome for the CF task 
\cite{he2020lightgcn,simplify19wu,Wang2020SIGIR},
in this paper, we present a really simple yet theoretical effective design of $f_{\mathsf{u}}$ and $f_{\mathsf{i}}$, 
which brings little optimization difficulty and has sufficient expressiveness from 
$\boldsymbol{T}_{\mathsf{u}}$ and $\boldsymbol{T}_{\mathsf{i}}$. Formally, we define
\begin{equation} \label{eq:initialembedding}
   \boldsymbol{e}_{\mathsf{u}} = \sum_{\mathsf{i} \in \mathcal{N}_{\mathsf{u}} \cap I_{tem}}  \boldsymbol{t}_\mathsf{i}, \quad
   \boldsymbol{e}_{\mathsf{i}} = \sum_{\mathsf{u} \in \mathcal{N}_{\mathsf{i}} \cap U_{tem}}  \boldsymbol{t}_\mathsf{u},
\end{equation}
where $\boldsymbol{t}_\mathsf{i}$ is the template item vector of $\mathsf{i}$. Through such inductive functions, we can generate the inductive embeddings for new users and new items which are unseen during the training phase.

We first theoretically prove the expressiveness of INMO in Eq.~(\ref{eq:initialembedding}) based on a representative latent factor model 
and then present a both theoretically and empirically effective indicator to determine the set of template users/items. 
\begin{theorem}
  \label{th:1}
	Assuming the original MF can achieve a matrix factorization error $\epsilon$ on the 
  interaction matrix $\boldsymbol{Y}$, then there exists a solution for INMO-MF 
  such that its error is less than or equal to $\epsilon$, when we take all users/items as the template users/items.
\end{theorem}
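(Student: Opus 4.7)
The strategy is to express INMO-MF's predicted score matrix in closed form as a function of the template matrices, and then to show that any rank-$d$ reconstruction attainable by plain MF can be matched (up to its projection onto the column and row spaces of $\boldsymbol{Y}$, which is unavoidable for INMO) with no increase in Frobenius error.

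First, with $I_{tem}=I$ and $U_{tem}=U$, stacking Eq.~(\ref{eq:initialembedding}) row-wise gives $\boldsymbol{E}_U = \boldsymbol{Y}\,\boldsymbol{T}_\mathsf{i}$ and $\boldsymbol{E}_I = \boldsymbol{Y}^T \boldsymbol{T}_\mathsf{u}$, so the INMO-MF score matrix admits the closed form
\begin{equation*}
\hat{\boldsymbol{Y}}_{\text{INMO}} = \boldsymbol{E}_U \boldsymbol{E}_I^T = \boldsymbol{Y}\,\boldsymbol{T}_\mathsf{i}\,\boldsymbol{T}_\mathsf{u}^T\,\boldsymbol{Y}.
\end{equation*}
Every column of $\hat{\boldsymbol{Y}}_{\text{INMO}}$ therefore lies in the column space $\mathcal{C}(\boldsymbol{Y})$ and every row in the row space $\mathcal{R}(\boldsymbol{Y})$.

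Next, I would project an arbitrary MF optimum into this reachable set. Given $\boldsymbol{M}^\star = \boldsymbol{E}_U^\star (\boldsymbol{E}_I^\star)^T$ of rank at most $d$ with $\|\boldsymbol{Y} - \boldsymbol{M}^\star\|_F \leq \epsilon$, and writing $P_C$ and $P_R$ for the orthogonal projectors onto $\mathcal{C}(\boldsymbol{Y})$ and $\mathcal{R}(\boldsymbol{Y})$, the identity $\boldsymbol{Y} = P_C \boldsymbol{Y} P_R$ combined with the non-expansiveness of orthogonal projections in the Frobenius norm implies that $\boldsymbol{M}' := P_C \boldsymbol{M}^\star P_R$ (still of rank $\leq d$) satisfies
\begin{equation*}
\|\boldsymbol{Y} - \boldsymbol{M}'\|_F = \|P_C(\boldsymbol{Y} - \boldsymbol{M}^\star) P_R\|_F \leq \|\boldsymbol{Y} - \boldsymbol{M}^\star\|_F \leq \epsilon.
\end{equation*}

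Finally, I would realize $\boldsymbol{M}'$ by a valid choice of templates. Take any rank-$d$ factorization $\boldsymbol{M}' = \boldsymbol{E}_U'(\boldsymbol{E}_I')^T$ coming from its thin SVD; by construction the columns of $\boldsymbol{E}_U'$ lie in $\mathcal{C}(\boldsymbol{Y})$ and those of $\boldsymbol{E}_I'$ lie in $\mathcal{R}(\boldsymbol{Y})$, so the linear systems $\boldsymbol{Y}\,\boldsymbol{T}_\mathsf{i}^\star = \boldsymbol{E}_U'$ and $\boldsymbol{Y}^T\,\boldsymbol{T}_\mathsf{u}^\star = \boldsymbol{E}_I'$ are consistent (one concrete solution uses the Moore--Penrose pseudoinverse of $\boldsymbol{Y}$). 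Substituting these templates into the closed form yields $\hat{\boldsymbol{Y}}_{\text{INMO}} = \boldsymbol{M}'$, so the INMO-MF error is at most $\|\boldsymbol{Y} - \boldsymbol{M}'\|_F \leq \epsilon$. The one genuine obstacle is the structural constraint that $\hat{\boldsymbol{Y}}_{\text{INMO}}$ must already have columns in $\mathcal{C}(\boldsymbol{Y})$ and rows in $\mathcal{R}(\boldsymbol{Y})$, which a generic MF optimum need not respect; everything else reduces to routine linear algebra, so the heart of the argument is verifying that this two-sided projection is lossless against the target $\boldsymbol{Y}$.
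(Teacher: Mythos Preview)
Your argument is correct and takes a genuinely different route from the paper. The paper goes through Eckart--Young directly: it writes the SVD $\boldsymbol{Y}=\boldsymbol{U}\boldsymbol{S}\boldsymbol{V}^T$, sets the templates explicitly to $\boldsymbol{T}_\mathsf{u}=\boldsymbol{U}^d(\boldsymbol{S}^d)^{-1}$ and $\boldsymbol{T}_\mathsf{i}=\boldsymbol{V}^d$, and verifies in one line that $\boldsymbol{Y}\boldsymbol{T}_\mathsf{i}\boldsymbol{T}_\mathsf{u}^T\boldsymbol{Y}=\boldsymbol{U}^d(\boldsymbol{U}^d)^T\boldsymbol{Y}=\boldsymbol{U}^d\boldsymbol{S}^d(\boldsymbol{V}^d)^T$, the best rank-$d$ approximation; since $\epsilon_{min}\le\epsilon$ for any achievable MF error, the claim follows. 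You instead start from an \emph{arbitrary} MF solution of error $\epsilon$, project it two-sidedly onto the column and row spaces of $\boldsymbol{Y}$, use non-expansiveness of orthogonal projections to show the error does not increase, and then realize the projected matrix with templates obtained via the pseudoinverse. Your version is slightly more general---it matches any given $\epsilon$ without detouring through $\epsilon_{min}$, and it sidesteps the implicit assumption that the top $d$ singular values of $\boldsymbol{Y}$ are positive---while the paper's version is more concrete, producing explicit templates that are then reused verbatim in the proof of Theorem~\ref{th:2}.
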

\begin{proof}
Here, we theoretically prove that, when we take all users/items as the template users/items, 
with the embeddings $\boldsymbol{e}_{\mathsf{u}}$ and $\boldsymbol{e}_{\mathsf{i}}$ generated 
from  Eq.~(\ref{eq:initialembedding}), the performance of INMO-MF would not be worse than the original MF.

Essentially, matrix factorization aims to do a low rank approximation on the interaction matrix, i.e., minimizing the difference between $\boldsymbol{Y}$ and $\boldsymbol{E}_{\mathsf{u}} \boldsymbol{E}_{\mathsf{i}}^T$, where $\boldsymbol{Y} \in \{0, 1\}^{n \times m}$ 
is the observed interaction matrix between users and items. 
According to the Eckart-Young theorem, we can get 
$\boldsymbol{Y}=\boldsymbol{U}_{n \times p}\boldsymbol{S}_{p \times p}\boldsymbol{V}^T_{p \times m}=
\boldsymbol{U}^d_{n \times d}\boldsymbol{S}^d_{d \times d}(\boldsymbol{V}^d)^T_{d \times m}+
\boldsymbol{U}^{\epsilon}_{n \times (p-d)}\boldsymbol{S}^{\epsilon}_{(p-d) \times (p-d)}(\boldsymbol{V}^{\epsilon})^T_{(p-d) \times m}$, 
where $p=min(n,m)$. $\boldsymbol{S}$ is a diagonal matrix whose elements are singular values of $\boldsymbol{Y}$, 
and $\boldsymbol{U}, \boldsymbol{V}$ are column orthogonal matrices. 
$\boldsymbol{U}^d\boldsymbol{S}^d(\boldsymbol{V}^d)^T$ with $d$ largest singular values in $\boldsymbol{S}^d$, 
is the closest rank-$d$ matrix to $\boldsymbol{Y}$ in both 
Frobenius norm and spectral norm, denoted as $|Y-U^d S^d(V^d)^T|_F =\epsilon_{min}$, where $\epsilon_{min}=min_{rank(\hat{Y})=d}|Y-\hat{Y}|_F$ and $\hat{Y}=\boldsymbol{E}_{\mathsf{u}} \boldsymbol{E}_{\mathsf{i}}^T$ is the low rank approximation of MF.

Next, we show that with 
$\boldsymbol{E}_{\mathsf{u}}=\boldsymbol{Y}\boldsymbol{T}_{\mathsf{i}}, 
\boldsymbol{E}_{\mathsf{i}}=\boldsymbol{Y}^T\boldsymbol{T}_{\mathsf{u}}$ (the matrix form of Eq.~(\ref{eq:initialembedding})), 
INMO could learn the closest rank-$d$ solution of $\boldsymbol{Y}$. 
In other words, with the same embedding dimension, INMO-MF can achieve the error $\epsilon_{min}$ which is the minimum 
possible error obtained by any solutions of MF.
Specifically, there exists a solution for INMO-MF as $\boldsymbol{T}_{\mathsf{u}}=\boldsymbol{U}^d(\boldsymbol{S}^d)^{-1}, 
\boldsymbol{T}_{\mathsf{i}}=\boldsymbol{V}^d$, that has
%\begin{small}
%  \begin{equation}
%    \begin{split}
%        |\boldsymbol{Y} - \boldsymbol{E}_{\mathsf{u}}\boldsymbol{E}_{\mathsf{i}}^T|_F
%      = &|\boldsymbol{Y} - \boldsymbol{Y}\boldsymbol{V}^d(\boldsymbol{S}^d)^{-1}(\boldsymbol{U}^d)^T\boldsymbol{Y}|_F \\
%      = &|\boldsymbol{Y} - (\boldsymbol{U}^d\boldsymbol{S}^d(\boldsymbol{V}^d)^T + \boldsymbol{U}^{\epsilon}\boldsymbol{S}^{\epsilon}(\boldsymbol{V}^{\epsilon})^T)
%      \boldsymbol{V}^d(\boldsymbol{S}^d)^{-1}(\boldsymbol{U}^d)^T\boldsymbol{Y}|_F \\
%      = &|\boldsymbol{Y} - \boldsymbol{U}^d(\boldsymbol{U}^d)^T\boldsymbol{Y}|_F 
%      = |\boldsymbol{Y} - \boldsymbol{U}^d\boldsymbol{S}^d(\boldsymbol{V}^d)^T|_F = \epsilon_{min}.
%    \end{split}
%  \end{equation}
%\end{small}

\begin{small}
  \begin{equation}
    \begin{split}
        |\boldsymbol{Y} - \boldsymbol{E}_{\mathsf{u}}\boldsymbol{E}_{\mathsf{i}}^T|_F
      = &|\boldsymbol{Y} - \boldsymbol{Y}\boldsymbol{V}^d(\boldsymbol{S}^d)^{-1}(\boldsymbol{U}^d)^T\boldsymbol{Y}|_F \\
      = &|\boldsymbol{Y} - \boldsymbol{U}^d(\boldsymbol{U}^d)^T\boldsymbol{Y}|_F 
      = |\boldsymbol{Y} - \boldsymbol{U}^d\boldsymbol{S}^d(\boldsymbol{V}^d)^T|_F = \epsilon_{min}.
    \end{split}
  \end{equation}
\end{small}

%Note that we have $((\boldsymbol{S}^d)^{-1})^T=(\boldsymbol{S}^d)^{-1},
%(\boldsymbol{U}^d)^T\boldsymbol{U}^d=(\boldsymbol{V}^d)^T\boldsymbol{V}^d=\boldsymbol{E}_{k \times k},
%(\boldsymbol{V}^{\epsilon})^T\boldsymbol{V}^d=\boldsymbol{0}_{(p-d) \times d},
%(\boldsymbol{U}^{d})^T\boldsymbol{U}^{\epsilon}=\boldsymbol{0}_{d \times (p-d)}$.
\end{proof}
The above proof validates that, our \textbf{INMO-MF has at least the same expressiveness as the original MF} %in the transductive scenario, 
while being capable to do inductive recommendations. 
A similar proof could be deduced for INMO-LGCN, which is a linear model as well.

The next important question is how to carefully select the template users/items in order to reduce model parameters while 
avoiding the additional error as much as possible.
When we only select a part of users/items as the template ones, the INMO in Eq.~(\ref{eq:initialembedding}) can be written 
as $\boldsymbol{E}_{\mathsf{u}}=\boldsymbol{Y}_{n \times m}(\boldsymbol{C}_\mathsf{i})_{m \times m_t}(\boldsymbol{T}_{\mathsf{i}})_{m_t \times d}$. 
%It is natural to change $\boldsymbol{E}_{\mathsf{u}}=\boldsymbol{Y}_{n \times m}(\boldsymbol{T}_{\mathsf{i}})_{m \times d}$ to
%$\boldsymbol{E}_{\mathsf{u}}=(\boldsymbol{Y}_\mathsf{i})_{n \times m_t}(\boldsymbol{T}_{\mathsf{i}})_{m_t \times d}$, 
% where $m_t$ is the number of template items.
Each column of $\boldsymbol{C}_\mathsf{i}$ is an one-hot vector, and each row of $\boldsymbol{C}_\mathsf{i}$ has at most one non-zero entry.
$(\boldsymbol{C}_\mathsf{i})_{i,j}=1$ means that the $i$-th item $\mathsf{i}_i$ is selected as the $j$-th template item $(\mathsf{i}_{tem})_j$.
Similarly, $\boldsymbol{E}_{\mathsf{i}}=\boldsymbol{Y}^T_{m \times n}(\boldsymbol{C}_\mathsf{u})_{n \times n_t}(\boldsymbol{T}_{\mathsf{u}})_{n_t \times d}$.
The number of model parameters in INMO is now $(n_t+m_t)d$, which is much smaller than the original embedding table with $(n+m)d$ parameters.
\begin{theorem}
  \label{th:2}
  When selecting those users $\mathsf{u}_j$ with the largest 
  $|\boldsymbol{s}^{\mathsf{u}}_j|_2^2 \sum_{\mathsf{i} \in \mathcal{N}_{\mathsf{u}_j}} |\mathcal{N}_{\mathsf{i}}|$ as the template users, 
  INMO minimizes an upper bound of the additional error caused by ignoring non-template users.
\end{theorem}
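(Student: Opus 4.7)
The plan is to bound the additional error from restricting to a subset $U_{tem}$ of template users (the argument for template items is symmetric) by decomposing it as a sum of per-user contributions and then matching the Cauchy--Schwarz bound against the indicator stated in the theorem. I would start from the optimal full-template parameters from Theorem~\ref{th:1}, namely $\boldsymbol{T}_{\mathsf{u}}^{full}=\boldsymbol{U}^d(\boldsymbol{S}^d)^{-1}$ and $\boldsymbol{T}_{\mathsf{i}}^{full}=\boldsymbol{V}^d$, and take as the candidate partial-template solution the restriction of $\boldsymbol{T}_{\mathsf{u}}^{full}$ to the rows indexed by $U_{tem}$ (equivalently, zeroing the rows of non-template users). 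The extra reconstruction error is then attributable entirely to those zeroed rows.

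Next, I would compute the pointwise error at entry $(k,i)$: because the item embedding $\boldsymbol{e}_{\mathsf{i},i}$ in the partial solution drops every non-template user who interacted with item $\mathsf{i}_i$, the discrepancy is $\boldsymbol{e}_{\mathsf{u},k}^{T}\sum_{l\in\mathcal{N}_{\mathsf{i}_i}\setminus U_{tem}}\boldsymbol{t}_{\mathsf{u}}^{l}$. Squaring and summing over $(k,i)$ yields the squared Frobenius discrepancy. Applying Cauchy--Schwarz to the inner sum over $l$ bounds each squared term by $|\mathcal{N}_{\mathsf{i}_i}|\sum_{l\in\mathcal{N}_{\mathsf{i}_i}\setminus U_{tem}}(\boldsymbol{e}_{\mathsf{u},k}^{T}\boldsymbol{t}_{\mathsf{u}}^{l})^{2}$. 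After swapping the summation order so that each non-template user $l$ collects the factor $\sum_{\mathsf{i}\in\mathcal{N}_{\mathsf{u}_l}}|\mathcal{N}_{\mathsf{i}}|$, the remaining $\sum_{k}(\boldsymbol{e}_{\mathsf{u},k}^{T}\boldsymbol{t}_{\mathsf{u}}^{l})^{2}=|\boldsymbol{E}_{\mathsf{u}}^{full}\boldsymbol{t}_{\mathsf{u}}^{l}|_{2}^{2}$ collapses, by the orthonormality of the columns of $\boldsymbol{U}^d$, to $|\boldsymbol{s}_{l}^{\mathsf{u}}|_{2}^{2}$, where $\boldsymbol{s}_{l}^{\mathsf{u}}$ is the spectral signature of user $l$ induced by the SVD in Theorem~\ref{th:1}.

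Combining these steps, the squared additional error is upper-bounded by a sum that is \emph{separable over the non-template users}, namely $\sum_{l\notin U_{tem}}|\boldsymbol{s}_{l}^{\mathsf{u}}|_{2}^{2}\sum_{\mathsf{i}\in\mathcal{N}_{\mathsf{u}_l}}|\mathcal{N}_{\mathsf{i}}|$. Because the sum is separable, minimizing it over template selections is a greedy subset-selection problem: we should place into $U_{tem}$ exactly those users with the largest indicator values $|\boldsymbol{s}_{j}^{\mathsf{u}}|_{2}^{2}\sum_{\mathsf{i}\in\mathcal{N}_{\mathsf{u}_j}}|\mathcal{N}_{\mathsf{i}}|$, which is precisely the theorem's claim.

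The main obstacle I anticipate is twofold. First, justifying that the user-selection and item-selection errors can be analyzed independently (via a triangle-inequality splitting of the combined discrepancy) without incurring a loose cross term that obscures the indicator. Second, making the identification of $\boldsymbol{s}_{l}^{\mathsf{u}}$ with a row of $\boldsymbol{U}^d$ precise, so that the collapse $|\boldsymbol{E}_{\mathsf{u}}^{full}\boldsymbol{t}_{\mathsf{u}}^{l}|_{2}=|\boldsymbol{s}_{l}^{\mathsf{u}}|_{2}$ follows rigorously from the column-orthogonality of the SVD factor. That orthogonality collapse is the key simplification that turns a generic operator-norm bound into the sharp, interpretable indicator in the theorem statement.
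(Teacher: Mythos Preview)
Your proposal is correct and essentially matches the paper's approach: the paper writes the additional error as the matrix $\boldsymbol{U}^d(\boldsymbol{U}^d)^T\boldsymbol{L}_\mathsf{u}\boldsymbol{Y}$ (defining $\boldsymbol{s}^{\mathsf{u}}_j$ as the $j$-th \emph{column} of $\boldsymbol{U}^d(\boldsymbol{U}^d)^T$, whose norm equals that of the $j$-th row of $\boldsymbol{U}^d$, resolving your second obstacle), then applies the vector Cauchy--Schwarz and swaps summation order exactly as you describe to obtain the separable bound $\sum_{\mathsf{u}_j\notin U_{tem}}|\boldsymbol{s}^{\mathsf{u}}_j|_2^2\sum_{\mathsf{i}\in\mathcal{N}_{\mathsf{u}_j}}|\mathcal{N}_{\mathsf{i}}|$. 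Regarding your first obstacle, the paper sidesteps the cross term entirely by assuming $\boldsymbol{L}_\mathsf{i}=\boldsymbol{0}$ (all items are templates) and analyzing only the user side, so no joint decoupling is actually carried out.
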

\begin{proof}
Similar to the proof in Theorem \ref{th:1}, a solution of INMO-MF can be written as 
$\boldsymbol{T}_{\mathsf{u}}=\boldsymbol{C}_\mathsf{u}^T\boldsymbol{U}^d(\boldsymbol{S}^d)^{-1}, 
\boldsymbol{T}_{\mathsf{i}}=\boldsymbol{C}_\mathsf{i}^T\boldsymbol{V}^d$, where $\boldsymbol{C}_\mathsf{u}$, $\boldsymbol{C}_\mathsf{i}$ indicate the selected template users/items. Then, 
\begin{small}
  \begin{equation}
    \begin{split}
      |\boldsymbol{Y} - \boldsymbol{E}_{\mathsf{u}}\boldsymbol{E}_{\mathsf{i}}^T|_F
      = &|\boldsymbol{Y} - \boldsymbol{Y}\boldsymbol{C}_\mathsf{i}\boldsymbol{C}_\mathsf{i}^T\boldsymbol{V}^d(\boldsymbol{S}^d)^{-1}
      (\boldsymbol{U}^d)^T\boldsymbol{C}_\mathsf{u}\boldsymbol{C}_\mathsf{u}^T\boldsymbol{Y}|_F \\
      % = &\boldsymbol{Y}\boldsymbol{D}_\mathsf{i}\boldsymbol{V}^d(\boldsymbol{S}^d)^{-1}
      % (\boldsymbol{U}^d)^T\boldsymbol{D}_\mathsf{u}\boldsymbol{Y} \\
      = &|\boldsymbol{Y} - \boldsymbol{Y}(\boldsymbol{E}-\boldsymbol{L}_\mathsf{i})\boldsymbol{V}^d(\boldsymbol{S}^d)^{-1}
      (\boldsymbol{U}^d)^T(\boldsymbol{E}-\boldsymbol{L}_\mathsf{u})\boldsymbol{Y}|_F \\
      = &|\boldsymbol{Y} - \boldsymbol{U}^d\boldsymbol{S}^d(\boldsymbol{V}^d)^T - 
      \boldsymbol{Y}\boldsymbol{L}_\mathsf{i}\boldsymbol{V}^d(\boldsymbol{S}^d)^{-1}(\boldsymbol{U}^d)^T\boldsymbol{L}_\mathsf{u}\boldsymbol{Y} \\
      & + \boldsymbol{U}^d(\boldsymbol{U}^d)^T\boldsymbol{L}_\mathsf{u}\boldsymbol{Y} + \boldsymbol{Y}\boldsymbol{L}_\mathsf{i}\boldsymbol{V}^d(\boldsymbol{V}^d)^T|_F,
    \end{split}
  \end{equation}
\end{small}
where $\boldsymbol{L}_\mathsf{u} \in \{0, 1\}^{n \times n}$ is a diagonal matrix 
and $(\boldsymbol{L}_\mathsf{u})_{i,i}=1$ means that $\mathsf{u}_i$ is not a template user.
For simplicity, we only consider the template users and assume $\boldsymbol{L}_\mathsf{i}=\boldsymbol{0}$. Then,
\begin{equation}
  \begin{split}
    |\boldsymbol{Y} - \boldsymbol{E}_{\mathsf{u}}\boldsymbol{E}_{\mathsf{i}}^T|_F
  = &|\boldsymbol{Y} - \boldsymbol{U}^d\boldsymbol{S}^d(\boldsymbol{V}^d)^T + 
  \boldsymbol{U}^d(\boldsymbol{U}^d)^T\boldsymbol{L}_\mathsf{u}\boldsymbol{Y}|_F \\
  \le &|\boldsymbol{Y} - \boldsymbol{U}^d\boldsymbol{S}^d(\boldsymbol{V}^d)^T|_F +
  |\boldsymbol{U}^d(\boldsymbol{U}^d)^T\boldsymbol{L}_\mathsf{u}\boldsymbol{Y}|_F \\
  = & \epsilon_{min} + |\boldsymbol{U}^d(\boldsymbol{U}^d)^T\boldsymbol{L}_\mathsf{u}\boldsymbol{Y}|_F.
  \end{split}
\end{equation}
To minimize the norm of the additional error matrix $\boldsymbol{U}^d(\boldsymbol{U}^d)^T\boldsymbol{L}_\mathsf{u}\boldsymbol{Y}$, 
% it provides us with some guides to determine the template users. 
% Specifically, we want to minimize $|\boldsymbol{U}^d(\boldsymbol{U}^d)^T\boldsymbol{L}_\mathsf{u}\boldsymbol{Y}|_2^2$.
let $\boldsymbol{U}^d(\boldsymbol{U}^d)^T=(\boldsymbol{s}^{\mathsf{u}}_1, \boldsymbol{s}^{\mathsf{u}}_2, \cdots, \boldsymbol{s}^{\mathsf{u}}_n)$
and $U_{non-tem}$ denotes the set of non-template users. Then, 
\begin{small}
\begin{equation}
\boldsymbol{U}^d(\boldsymbol{U}^d)^T\boldsymbol{L}_\mathsf{u}\boldsymbol{Y} = 
(\sum_{\mathsf{u}_j \in \mathcal{N}_{\mathsf{i}_1}  \cap U_{non-tem}} \boldsymbol{s}^{\mathsf{u}}_j, \cdots, 
\sum_{\mathsf{u}_j \in \mathcal{N}_{\mathsf{i}_m}  \cap U_{non-tem}} \boldsymbol{s}^{\mathsf{u}}_j)
\end{equation}
\begin{equation}\label{eq:metrics} 
  \begin{split}
  |\boldsymbol{U}^d(\boldsymbol{U}^d)^T\boldsymbol{L}_\mathsf{u}\boldsymbol{Y}|_F^2 &= \sum_{\mathsf{i} \in I} 
  |\sum_{\mathsf{u}_j \in \mathcal{N}_{\mathsf{i}}  \cap U_{non-tem}} \boldsymbol{s}^{\mathsf{u}}_j|_2^2 \\
  &\le \sum_{\mathsf{i} \in I}  |\mathcal{N}_{\mathsf{i}}  \cap U_{non-tem}| \sum_{\mathsf{u}_j \in \mathcal{N}_{\mathsf{i}}  
  \cap U_{non-tem}} |\boldsymbol{s}^{\mathsf{u}}_j|_2^2 \\
  &\le \sum_{\mathsf{u}_j \in U_{non-tem}} |\boldsymbol{s}^{\mathsf{u}}_j|_2^2 \sum_{\mathsf{i} \in \mathcal{N}_{\mathsf{u}_j}} |\mathcal{N}_{\mathsf{i}}|
  \end{split}
\end{equation}
\end{small}
Apparently, the above error upper bound can be minimized by selecting the template users with the largest 
$|\boldsymbol{s}^{\mathsf{u}}_j|_2^2 \sum_{\mathsf{i} \in \mathcal{N}_{\mathsf{u}_j}} |\mathcal{N}_{\mathsf{i}}|$.
\end{proof}
Next, we conduct an empirical experiment on a real-world dataset Gowalla, to validate the effectiveness of the theoretical indicator 
$|\boldsymbol{s}^{\mathsf{u}}_j|_2^2 \sum_{\mathsf{i} \in \mathcal{N}_{\mathsf{u}_j}} |\mathcal{N}_{\mathsf{i}}|$ to select the template users ($|\boldsymbol{s}^{\mathsf{i}}_j|_2^2 \sum_{\mathsf{u} \in \mathcal{N}_{\mathsf{i}_j}} |\mathcal{N}_{\mathsf{u}}|$ to select the template items), namely the error-sort indicator. 
We compare our proposed error-sort with other two heuristic indicators, i.e., node degree and page rank \cite{ilprints422}.

\begin{figure}[t]
  \centering
  \includegraphics[width=0.9\linewidth]{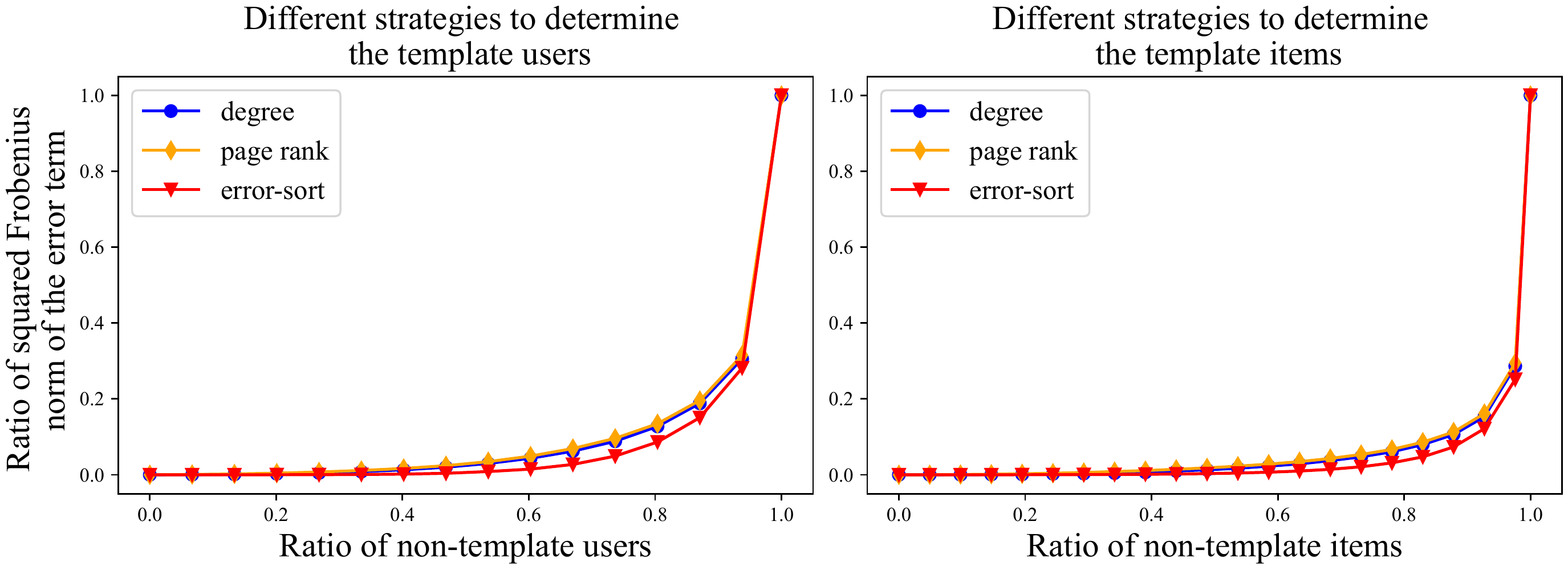}
  \caption{The ratio of squared Frobenius norm of the error term under different ratios of non-template users/items.}
  \Description{Two line charts indicating the trends of the error term by different selection approaches.}
  \label{fig:error}
  %\vspace{-0.3cm}
\end{figure}

Figure \ref{fig:error} demonstrates how the additional error $|\boldsymbol{U}^d(\boldsymbol{U}^d)^T\boldsymbol{L}_\mathsf{u}\boldsymbol{Y}|_F^2$ 
($|\boldsymbol{Y}\boldsymbol{L}_\mathsf{i}\boldsymbol{V}^d(\boldsymbol{V}^d)^T|_F^2$ ) changes with the number of non-template users (non-template items). 
There are two important findings. First, we can ignore $70\%$ users/items and set them as the non-template ones, 
which only leads to an additional error smaller than $10\%$. Besides, we notice the error-sort indicator based on Theorem \ref{th:2} 
is significantly better than another two heuristic indicators.

\subsection{Inductive Embedding Generation}
Based on our above analyses, we now introduce the specific design of INMO.
According to Figure \ref{fig:error}, most of the users/items could be ignored as the non-template ones. 
When obtaining the embeddings of users and items, INMO only utilizes their interactions with the template items $I_{tem}$ and template users $U_{tem}$.
To stabilize the training procedure, we add a denominator to Eq.~(\ref{eq:initialembedding}), adjusting the norms of embeddings.  
Formally, 
\begin{equation} \label{eq:embedding}
 \begin{aligned}
  \boldsymbol{e}_{\mathsf{u}} &= \frac{1}{(|\mathcal{N}_{\mathsf{u}} \cap I_{tem}| + 1)^\alpha} 
  (\sum_{\mathsf{i} \in \mathcal{N}_{\mathsf{u}} \cap I_{tem}}  \boldsymbol{t}_\mathsf{i} + \boldsymbol{t}_{user})
  \quad,\\
  \boldsymbol{e}_{\mathsf{i}} &= \frac{1}{(|\mathcal{N}_{\mathsf{i}}  \cap U_{tem}| + 1)^\alpha} 
  (\sum_{\mathsf{u} \in \mathcal{N}_{\mathsf{i}}  \cap U_{tem}}  \boldsymbol{t}_\mathsf{u} + \boldsymbol{t}_{item})
  \quad,
 \end{aligned}
\end{equation}
\noindent where $\boldsymbol{e}_{\mathsf{u}}, \boldsymbol{e}_{\mathsf{i}}$ denote the inductive embeddings of user $\mathsf{u}$ and item $\mathsf{i}$.
Here, we use $\boldsymbol{t}_{user}$ and $\boldsymbol{t}_{item}$ to model the general characteristics of users and items,
which may help make recommendations for new users and new items with few observed interactions. 
$\alpha$ is an exponent controlling the degree of normalization, which we will discuss in Section \ref{sec:na}.

With the additional denominator, the indicator error-sort should be fine-tuned as 
$|\boldsymbol{s}'^{\mathsf{u}}_j|_2^2 \sum_{\mathsf{i} \in \mathcal{N}_{\mathsf{u}_j}} 1/|\mathcal{N}_{\mathsf{i}}|$, where 
$\boldsymbol{s}'^{\mathsf{u}}_j$ is the $j_{th}$ column of $\boldsymbol{D}_{\mathsf{u}}^{-1}\boldsymbol{U}^d(\boldsymbol{U}^d)^T$
and $\boldsymbol{D}_{\mathsf{u}}$ is the diagonal degree matrix of users. 
In this case, $|\boldsymbol{s}'^{\mathsf{u}}_j|$ will highlight the importance of low-degree users, which may introduce some noises in practice.
Moreover, the calculation of $\boldsymbol{U}^d(\boldsymbol{U}^d)^T$ is expensive. 
Therefore, we implement a simplified version of \textbf{error-sort}, i.e., sorting users by 
$\sum_{\mathsf{i} \in \mathcal{N}_{\mathsf{u}}} 1/|\mathcal{N}_{\mathsf{i}}|$.

Figure \ref{model} shows an example user-item interaction graph, where red circles denote template users and red squares denote template items.
For user $\mathsf{u}_2$, its embedding is aggregated by both the template vector $\boldsymbol{t}_{\mathsf{i}_4}$ of the interacted template item 
$\mathsf{i}_4$ and the global user template $\boldsymbol{t}_{user}$. 
Note that, although item $\mathsf{i}_1$ also interacts with user $\mathsf{u}_2$, according to our error-sort indicator, we neither learn the template vector of $\mathsf{i}_1$ nor use it to represent users.

The proposed INMO has two major advantages: 
1) \textbf{Inductive ability}. When facing new users and new items after training, 
INMO can obtain their embeddings without the need of retraining.
2) \textbf{Adjustable scalability}. The number of parameters in INMO is only dependent on the number of template users 
and template items, which is adjustable according to available computing resources. 

\subsubsection*{Complexity Analysis}
The time complexity of INMO to generate the inductive embedding for user $\mathsf{u}$ is $O(|\mathcal{N}_{\mathsf{u}} \cap I_{tem}|)$, which 
is affordable in most cases. In contrast, IDCF \cite{pmlrwu21j} takes the $O(n)$ complexity to pass messages from key users  
and IGMC \cite{MuhanICLR20} needs to do the graph regression task $O(m)$ times to recommend for a single user.
In general cases, $|\mathcal{N}_{\mathsf{u}} \cap I_{tem}|$ is much smaller than $n$ and $m$, because a user always interacts with a limited number of items. 
The space complexity of INMO is $O((n_t+m_t)d)$, as we only need to save the template user vectors and template item vectors.
The embeddings of other users/items can be generated on the fly, which is memory-efficient.

\subsection{Model Optimization}
\subsubsection{Loss Function}
A commonly adopted training loss for top-k CF methods is the pairwise Bayesian Personalized Ranking (BPR) loss \cite{Rendle2009BPRBP}. 
It encourages the model to predict a higher score for an observed entry than an unobserved entry.
The loss function formulates as follows,
\begin{equation}
  \mathcal{L}_{BPR} = -\sum_{i=1}^{n} \sum_{\mathsf{i}_j \in \mathcal{N}_{\mathsf{u}_i}} \sum_{\mathsf{i}_k \notin \mathcal{N}_{\mathsf{u}_i}} 
  ln\ \sigma (\boldsymbol{r}_{\mathsf{u}_i}^T \boldsymbol{r}_{\mathsf{i}_j} - \boldsymbol{r}_{\mathsf{u}_i}^T \boldsymbol{r}_{\mathsf{i}_k}) + \lambda  ||\Theta||_2^2 \quad,
\end{equation}
\noindent where $\lambda$ controls the $L_2$ regularization strength and $\Theta$ denotes all trainable model parameters.

To facilitate the optimization of template user vectors and template item vectors, we propose an additional self-enhanced loss $\mathcal{L}_{SE}$.
Intuitively, the supervised BPR loss on the final representations $\boldsymbol{r}_\mathsf{u}, \boldsymbol{r}_\mathsf{i}$ may be not enough, 
since the template users/items are aggregated together and lose their identities when generating the inductive embeddings $\boldsymbol{e}_\mathsf{i}, \boldsymbol{e}_\mathsf{u}$.
So we design a new supervised signal to directly guide the learning process of $\boldsymbol{t}_\mathsf{u}, \boldsymbol{t}_\mathsf{i}$ 
for template users and template items, enhancing their identities,

\begin{footnotesize}
\begin{equation}
    \mathcal{L}_{SE} = -\sum_{\mathsf{u}_i \in U_{tem}} \sum_{\mathsf{i}_j \in \mathcal{N}_{\mathsf{u}_i} \cap I_{tem}}  
    \sum_{\mathsf{i}_k \in \overline{\mathcal{N}_{\mathsf{u}_i}} \cap I_{tem}} 
       ln\ \sigma (\boldsymbol{t}_{\mathsf{u}_i}^T \boldsymbol{W}_s \boldsymbol{t}_{\mathsf{i}_j} - 
       \boldsymbol{t}_{\mathsf{u}_i}^T \boldsymbol{W}_s \boldsymbol{t}_{\mathsf{i}_k})~. \\
\end{equation}
\end{footnotesize}

\noindent The final loss is $\mathcal{L} = \mathcal{L}_{BPR} + \beta \mathcal{L}_{SE}$, 
where $\beta$ is a hyper-parameter to balance the BPR loss and the self-enhanced loss.

From another perspective, $\mathcal{L}_{SE}$ is actually the BPR loss from the original MF, but only on the template users and template items. 
Consistent with the analysis in Section \ref{sec:th}, the optimal solution is $\boldsymbol{E}_{\mathsf{u}}=\boldsymbol{U}^d\boldsymbol{S}^d$, 
$\boldsymbol{E}_{\mathsf{i}}=\boldsymbol{V}^d$ for MF 
and $\boldsymbol{T}_{\mathsf{u}}=\boldsymbol{U}^d(\boldsymbol{S}^d)^{-1}$, $\boldsymbol{T}_{\mathsf{i}}=\boldsymbol{V}^d$ for INMO-MF. 
Thus, we add a learnable diagonal matrix $\boldsymbol{W}_s$ to model this difference.

\subsubsection{Normalization Annealing}
\label{sec:na}
Considering the exponent of normalization $\alpha$, we find that constantly setting it to $1$ may overly punish the weights of 
active users with long interaction histories. Especially in the early stage of training, 
hard normalization ($\alpha=1$) may lead to a slow convergence, which is consistent with the findings in \citep{he2018nais}.
Consequently, we adopt an annealing strategy to dynamically control the degree of normalization.
Specifically, in the training phase, $\alpha$ is first set to an initial value of $0.5$ and will be gradually increased to $1$.
In the test phase, $\alpha$ is fixed to $1$.
At the beginning of training, the embeddings of active users would be trained better than non-active users as they have more training data.
Therefore, increasing the norms of active users, that is, making the early normalization exponent $\alpha$ smaller than $1$, 
could temporarily emphasize active users and accelerate the training procedure.

\subsubsection{Drop Interaction}
Since we expect the learned model to be inductive, it is supposed to make recommendations for new users with unseen combinations of past interactions. 
For this reason, we propose to randomly drop the interaction sets $\mathcal{N}_u, \mathcal{N}_{\mathsf{i}}$ with a certain probability
during training as an approach of data augmentation, which also prevents the model from over-fitting.
In this way, INMO could see varying combinations of past interactions from a single user in different training steps, 
improving its inductive ability.

\section{Experiments}
To evaluate the effectiveness of our proposed inductive embedding module, 
we conduct extensive experiments on three public real-world datasets. 
Specifically, the experiments are intended to answer the following three questions:
\begin{itemize}[leftmargin=*]
  \item \textbf{Q1}: How does INMO perform in the transductive scenario as compared with the embedding lookup table?
  \item \textbf{Q2}: How strong is the inductive ability of INMO in inductive scenarios with new interactions and new users/items?
  \item \textbf{Q3}: How do different hyper-parameters (i.e., the strength of the self-enhanced loss and the number of template users/items) 
  and training techniques (i.e., normalization annealing and drop interaction) affect the performance of INMO?
\end{itemize}
Next, we introduce the specific experimental settings and present detailed experimental analyses to each question.

\subsection{Experimental Settings}

\begin{table}[t]
  \caption{Statistics of the datasets.}
  \label{tb:dataset}
  \begin{tabular}{ccccc}
    \toprule
    Dataset & \#Users & \#Items & \#Interactions & Density\\
    \midrule
    Gowalla & 29,858	& 40,988	& 1,027,464 & 0.00084 \\
    Yelp & 75,173 & 42,706 & 1,931,173 & 0.00060 \\
    Amazon-Book & 109,730	& 96,421 & 3,181,759 & 0.00030\\
    \bottomrule
\end{tabular}
%\vspace{-0.5cm}
\end{table}

\begin{table*}
  \caption{Overall performances in the transductive recommendation scenario.}
  \label{tb:trans}
  \scalebox{0.84}{
  \begin{threeparttable}   
  \begin{tabular}{c|ccc|ccc|ccc}
    \toprule
    & \multicolumn{3}{c}{Gowalla} & \multicolumn{3}{c}{Yelp} & \multicolumn{3}{c}{Amazon-book}\\
    \midrule
    & Recall@20 & Precision@20 & NDCG@20 & Recall@20 & Precision@20 & NDCG@20 & Recall@20 & Precision@20 & NDCG@20 \\
   \midrule
   NeuMF & $15.08(\pm 0.07)$	& $3.88(\pm 0.02)$	& $11.20(\pm 0.06)$ & $7.94(\pm 0.05)$	& $1.75(\pm 0.01)$
   & $4.87(\pm 0.04)$	& $\backslash$	& $\backslash$	& $\backslash$	 \\
   \hline
   Mult-VAE & $18.36(\pm 0.08)$	& $4.80(\pm 0.03)$	& $13.90(\pm 0.06)$ & $10.13(\pm 0.04)$	& $2.24(\pm 0.00)$
   & $6.49(\pm 0.02)$	& $13.63(\pm 0.05)$	& $2.81(\pm 0.01)$	& $9.44(\pm 0.05)$	 \\
   \hline
   NGCF & $17.36(\pm 0.08)$	& $4.48(\pm 0.04)$	& $12.77(\pm 0.13)$ & $8.72(\pm 0.22)$	& $1.82(\pm 0.28)$
   & $5.31(\pm 0.37)$	& $12.04(\pm 0.13)$	& $2.42(\pm 0.03)$	& $7.89(\pm 0.08)$	 \\
   \hline
   IMC-GAE & $15.42(\pm 0.09)$	& $3.98(\pm 0.03)$	& $11.63(\pm 0.05)$ & $6.43(\pm 0.05)$	& $1.41(\pm 0.01)$
   & $4.00(\pm 0.04)$	& $8.24(\pm 0.04)$	& $1.76(\pm 0.01)$	& $5.43(\pm 0.03)$	 \\
   \hline
   IDCF-LGCN & $12.80(\pm 0.23)$	& $3.43(\pm 0.05)$	& $9.60(\pm 0.15)$ & $5.49(\pm 0.14)$	& $1.32(\pm 0.02)$
   & $3.55(\pm 0.08)$	& $\backslash$	& $\backslash$	& $\backslash$	 \\
   \hline 
   \hline
   MF-BPR & $16.11(\pm 0.08)$	& $4.14(\pm 0.02)$	& $12.03(\pm 0.07)$ & $8.61(\pm 0.15)$	& $1.90(\pm 0.03)$
   & $5.36(\pm 0.09)$	& $11.41(\pm 0.13)$	& $2.28(\pm 0.03)$	& $7.37(\pm 0.11)$	 \\
   \hline
   INMO-MF & $18.41(\pm 0.10)$	& $4.92(\pm 0.02)$	& $14.15(\pm 0.10)$ & $9.54(\pm 0.03)$	& $2.10(\pm 0.01)$
   & $6.07(\pm 0.03)$	& $13.40(\pm 0.06)$	& $2.81(\pm 0.01)$	& $9.23(\pm 0.04)$	 \\
   \hline
   INMO-MF* & $16.20(\pm 0.09)$	& $4.41(\pm 0.03)$	& $12.40(\pm 0.08)$ & $9.09(\pm 0.09)$	& $2.01(\pm 0.01)$
   & $5.81(\pm 0.05)$	& $11.88(\pm 0.05)$	& $2.53(\pm 0.01)$	& $8.13(\pm 0.04)$	 \\
   \hline
   \hline
   LightGCN & $18.88(\pm 0.13)$	& $4.95(\pm 0.04)$	& $14.18(\pm 0.10)$ & $9.68(\pm 0.08)$	& $2.14(\pm 0.01)$
   & $6.16(\pm 0.06)$	& $12.32(\pm 0.10)$	& $2.56(\pm 0.02)$	& $8.19(\pm 0.06)$	 \\
   \hline
   INMO-LGCN & $ \textbf{20.17}(\pm 0.11)$	& $\textbf{5.36}(\pm 0.04)$	& $\textbf{15.41}(\pm 0.10)$ & $\textbf{10.26}(\pm 0.03)$	& $\textbf{2.25}(\pm 0.01)$
   & $\textbf{6.51}(\pm 0.02)$	& $\textbf{14.28}(\pm 0.05)$	& $\textbf{3.01}(\pm 0.01)$	& $\textbf{9.86}(\pm 0.03)$	 \\
   \hline
   INMO-LGCN* & $19.58(\pm 0.09)$	& $5.21(\pm 0.03)$	& $14.96(\pm 0.06)$ & $10.21(\pm 0.04)$	& $2.23(\pm 0.01)$
   & $6.47(\pm 0.03)$	& $13.77(\pm 0.04)$	& $2.92(\pm 0.01)$	& $9.53(\pm 0.04)$	 \\
   \bottomrule
\end{tabular}
\begin{tablenotes}    
        \footnotesize               
        \item  `$\backslash$': These methods cannot deal with the large dataset Amazon-book.  
      \end{tablenotes}
    \end{threeparttable} }
\end{table*}

\subsubsection{Dataset}
We conduct experiments on three public benchmark datasets:
\textbf{Gowalla}\footnote{\url{https://snap.stanford.edu/data/loc-Gowalla.html}} \cite{cho2011friendship}
, \textbf{Yelp}\footnote{\url{https://www.yelp.com/dataset}}, and \textbf{Amazon-Book}\footnote{\url{http://jmcauley.ucsd.edu/data/amazon/}} \cite{he2016ups}, 
where the items are locations, local businesses, and books respectively. 
For Yelp and Amazon-book, we regard the ratings greater than $3$ as observed interactions and filter out users and items with 
less than $10$ interactions, similar to the pre-processing procedure in \cite{liang2018variational}.
For each user, we randomly split its interactions into $70\%$, $10\%$, and $20\%$ as the train, validation, and test sets.
The experiments are repeated five times with different dataset splits, and the average results with standard deviations are reported.
The characteristics of the three datasets are summarized in Table \ref{tb:dataset}.

\subsubsection{Baseline Methods}
We implement our INMO with the classic MF (INMO-MF) and the state-of-the-art LightGCN (INMO-LGCN) to explore how INMO 
improves the recommendation accuracy and brings the inductive ability. 
We compare our methods with the following baselines:

\begin{itemize}[leftmargin=*]
  \item \textbf{MF-BPR} \cite{koren2009matrix}: It is the most basic latent factor model optimized by BPR loss, yielding
  competitive performances in many cases. 
  \item \textbf{NeuMF} \cite{he2017neural}: It explores the nonlinear interactions between the representations of users and items 
  by a fusion of Hadamard product and MLP. 
  \item \textbf{Mult-VAE} \cite{liang2018variational}: It employs the variational autoencoder architecture~\cite{KingmaW13} to encode 
  and decode users' interaction behaviors. 
  \item \textbf{NGCF} \cite{wang2019neural}: It is a GNN-based recommender model containing the feature transformation 
  and the nonlinear activation like the standard graph convolutional network (GCN) \cite{kipf2016semi}. 
  \item \textbf{LightGCN} \cite{he2020lightgcn}: It is a simplified version of GCN for recommendation, which linearly 
  propagates the representations and achieves the state-of-the-art transductive performance.
  \item \textbf{IMC-GAE} \cite{shen2021inductive}: It employs a postprocessing graph autoencoder for the inductive rating prediction. 
  We adapt it to do the implicit top-k recommendation task.
  \item \textbf{IDCF-LGCN} \cite{pmlrwu21j}: IDCF is a two-stage inductive recommendation framework, estimating the underlying 
  relations from key users to query users. We implement it with LightGCN to recommend for both new users and new items. 
\end{itemize}

\subsubsection{Evaluation Metrics}
We evaluate the effectiveness of CF methods on predicting users' preferences as a ranking problem. 
Specifically, three widely-used evaluation metrics for top-k recommender systems are adopted, i.e., $recall@k$, $precision@k$, and $NDCG@k$. 
We set $k=20$ following \cite{wang2019neural} and report the average values over all users in the test set.
For clarity, we show all the metrics after multiplied by $100$ in the tables of this paper, similar to \cite{he2018nais}.

\begin{table*}
  \caption{Performances in the inductive recommendation scenario with new interactions.}
  \label{tb:newit}
  \scalebox{0.82}{
  \begin{tabular}{c|ccc|ccc|ccc}
    \toprule
    & \multicolumn{3}{c}{Gowalla} & \multicolumn{3}{c}{Yelp} & \multicolumn{3}{c}{Amazon-book}\\
    \midrule
    & Recall@20 & Precision@20 & NDCG@20 & Recall@20 & Precision@20 & NDCG@20 & Recall@20 & Precision@20 & NDCG@20 \\
   \midrule
   INMO-LGCN-retrain & $ 20.17(\pm 0.11)$	& $5.36(\pm 0.04)$	& $15.41(\pm 0.10)$ & $10.26(\pm 0.03)$	& $2.25(\pm 0.01)$
   & $6.51(\pm 0.02)$	& $14.28(\pm 0.05)$	& $3.01(\pm 0.01)$	& $9.86(\pm 0.03)$	 \\
   \hline
   \hline
   Mult-VAE & $16.68(\pm 0.08)$	& $4.40(\pm 0.01)$	& $12.60(\pm 0.05)$ & $9.16(\pm 0.06)$	& $2.06(\pm 0.01)$
   & $5.86(\pm 0.03)$	& $11.79(\pm 0.06)$	& $2.46(\pm 0.01)$	& $8.11(\pm 0.04)$	 \\
   \hline
   Mult-VAE-new & $17.03(\pm 0.06)$	& $4.47(\pm 0.02)$	& $12.89(\pm 0.03)$ & $9.49(\pm 0.04)$	& $\textbf{2.11}(\pm 0.01)$
   & $\textbf{6.07}(\pm 0.02)$	& $12.30(\pm 0.05)$	& $2.55(\pm 0.01)$	& $8.46(\pm 0.02)$	 \\
   \hline
   IMC-GAE & $14.02(\pm 0.05)$	& $3.66(\pm 0.02)$	& $10.61(\pm 0.07)$ & $5.68(\pm 0.04)$	& $1.24(\pm 0.01)$
   & $5.57(\pm 0.03)$	& $6.80(\pm 0.03)$	& $1.50(\pm 0.01)$	& $4.52(\pm 0.03)$	 \\
   \hline
   IMC-GAE-new & $14.39(\pm 0.09)$	& $3.74(\pm 0.03)$	& $10.91(\pm 0.09)$ & $5.85(\pm 0.06)$	& $1.28(\pm 0.02)$
   & $3.68(\pm 0.04)$	& $7.22(\pm 0.02)$	& $1.58(\pm 0.01)$	& $4.80(\pm 0.03)$	 \\
   \hline
   LightGCN & $17.66(\pm 0.24)$	& $4.67(\pm 0.06)$	& $13.41(\pm 0.15)$ & $8.58(\pm 0.07)$	& $1.93(\pm 0.02)$
   & $5.48(\pm 0.06)$	& $10.67(\pm 0.07)$	& $2.27(\pm 0.02)$	& $7.14(\pm 0.06)$	 \\
   \hline
   LightGCN-new & $17.95(\pm 0.24)$	& $4.73(\pm 0.06)$	& $13.53(\pm 0.16)$ & $8.96(\pm 0.08)$	& $1.98(\pm 0.02)$
   & $5.69(\pm 0.06)$	& $11.65(\pm 0.08)$	& $2.46(\pm 0.02)$	& $7.80(\pm 0.07)$	 \\
   \hline
   IDCF-LGCN & $12.54(\pm 0.21)$	& $3.39(\pm 0.07)$	& $9.45(\pm 0.17)$ & $5.16(\pm 0.06)$	& $1.26(\pm 0.01)$
   & $3.34(\pm 0.03)$	& $\backslash$	& $\backslash$	& $\backslash$	 \\
   \hline
   IDCF-LGCN-new & $12.93(\pm 0.29)$	& $3.46(\pm 0.08)$	& $9.69(\pm 0.20)$ & $5.43(\pm 0.08)$	& $1.31(\pm 0.02)$
   & $3.52(\pm 0.05)$	& $\backslash$	& $\backslash$	& $\backslash$	 \\
   \hline
   \hline
   INMO-MF & $16.66(\pm 0.08)$	& $4.49(\pm 0.03)$	& $12.81(\pm 0.12)$ & $8.45(\pm 0.05)$	& $1.89(\pm 0.01)$
   & $5.37(\pm 0.03)$	& $11.30(\pm 0.04)$	& $2.42(\pm 0.00)$	& $7.76(\pm 0.01)$	 \\
   \hline
   INMO-MF-new & $17.45(\pm 0.08)$	& $4.65(\pm 0.04)$	& $13.37(\pm 0.13)$ & $8.98(\pm 0.06)$	& $1.97(\pm 0.01)$
   & $5.70(\pm 0.04)$	& $12.21(\pm 0.06)$	& $2.57(\pm 0.01)$	& $8.32(\pm 0.07)$	 \\
   \hline
   INMO-LGCN & $18.25(\pm 0.05)$	& $4.92(\pm 0.02)$	& $14.03(\pm 0.06)$ & $9.23(\pm 0.06)$	& $2.05(\pm 0.01)$
   & $5.87(\pm 0.04)$	& $12.16(\pm 0.05)$	& $2.64(\pm 0.01)$	& $8.39(\pm 0.01)$	 \\
   \hline
   INMO-LGCN-new & $\textbf{19.21}(\pm 0.04)$	& $\textbf{5.05}(\pm 0.02)$	& $\textbf{14.51}(\pm 0.03)$ & $9.58(\pm 0.09)$	& $2.07(\pm 0.01)$
   & $6.03(\pm 0.05)$	& $\textbf{13.44}(\pm 0.04)$	& $\textbf{2.79}(\pm 0.00)$	& $\textbf{9.02}(\pm 0.03)$	 \\
   \hline
   INMO-LGCN*-new & $18.61(\pm 0.19)$	& $4.92(\pm 0.05)$	& $14.13(\pm 0.15)$ & $\textbf{9.63}(\pm 0.07)$	& $2.07(\pm 0.01)$
   & $6.06(\pm 0.05)$	& $13.01(\pm 0.07)$	& $2.71(\pm 0.02)$	& $8.76(\pm 0.06)$	 \\
   \bottomrule
\end{tabular}}
%\vspace{-0.1cm}
\end{table*}

\subsubsection{Parameter Settings}
We implement our INMO-MF, INMO-LGCN, and all other baseline methods based on Pytorch. 
The codes including dataset processing, hyper-parameter tuning, and model implementations are accessible here
\footnote{\url{https://github.com/WuYunfan/igcn_cf}}.
All models are learned via optimizing the BPR loss, except that NeuMF uses the
binary cross-entropy loss and Mult-VAE maximizes the multinomial likelihood as proposed in their
original papers.
We use the Adam optimizer \cite{DieICLR15} to train all the models for at most $1000$ epochs.
The embedding size of different models is fixed to $64$ for a fair comparison, and the batch size is fixed to $2048$.
We apply a grid search on the validation set, tuning hyper-parameters for INMO and other baseline methods,
the learning rate is tuned over $\{10^{-4}, 10^{-3}, 10^{-2}\}$, the $L_2$ regularization coefficient over $\{0, 10^{-5}, 10^{-4}, 10^{-3}, 10^{-2}\}$, 
and the dropout rate over $\{0, 0.1, 0.3, 0.5, 0.7, 0.9\}$.
We set the number of graph convolution layers to three for all graph-based methods.
The sampling size of IDCF-LGCN is set to $50$ for an affordable time consumption.
Moreover, the early stopping strategy is performed, i.e., stopping training if $NDCG@20$ on the validation data does not increase for $50$ successive epochs.
To comprehensively demonstrate the effectiveness of our method, 
we implement two specific versions of INMO, denoted as INMO-MF, INMO-LGCN and INMO-MF*, INMO-LGCN*. 
Specifically, INMO-MF and INMO-LGCN take all users/items as the template users/items, 
while INMO-MF* and INMO-LGCN* take only $30\%$ of users as the template users and $30\%$ of items as the template items.
By default, the template users and template items are selected by the error-sort indicator.
A detailed analysis about the number of template users and template items is provided in Section \ref{sec:tem}.

\subsection{Transductive Recommendation (Q1)}
We first conduct traditional transductive experiments to demonstrate the general effectiveness of our proposed INMO.
Table \ref{tb:trans} shows the recommendation results of all methods in the transductive scenario
\footnote{As we run $5$ times on different divisions of datasets, 
our results are a little different from those reported in \cite{he2020lightgcn}.}. 
The results include seven baseline methods, and four models using our INMO, including INMO-MF, INMO-MF*, INMO-LGCN, and INMO-LGCN*. 
We have the following observations:
\begin{itemize}[leftmargin=*]
  \item Our proposed INMO-LGCN model outperforms all other baseline methods in the transductive recommendation scenario, 
  beating state-of-the-art recommenders Mult-VAE and LightGCN.
  \item Both of our implemented INMO-MF and INMO-LGCN significantly outperform their basic versions MF and LightGCN, which employ a traditional embedding lookup table. 
  It suggests the superiority of INMO in the transductive recommendation scenario, which can generate more accurate recommendation results.
  \item INMO-MF* and INMO-LGCN* show better performances than MF and LightGCN, while with only $30\%$ parameters, 
  indicating the potential of our INMO in resource limited applications.
\end{itemize}
In INMO-MF and INMO-LGCN, two users with the same historical behaviors will obtain the same embedding, i.e., the embedding mapping function is injective.
While in their original versions (MF and LightGCN), these two users may have different recommended items, owing to their randomly initialized individual embeddings.
Such injective property may further help the recommenders to make the most of the training data and reduce noises, leading to better performances.

\subsection{Inductive Recommendation (Q2)}
A great advantage of our INMO lies in its capability to model new interactions and new users/items in the test phase without the need of retraining.
Thus, we conduct experiments in two inductive recommendation scenarios with \emph{new interactions} and \emph{new users/items}, 
as described in Section \ref{sec:settings}.

\begin{table*}[ht!]
  \caption{Performances in the inductive recommendation scenario with new users and new items.}
  \label{tb:newui}
  \scalebox{0.86}{
  \begin{threeparttable}
  \begin{tabular}{c|ccc|ccc|cccc}
    \toprule
    & \multicolumn{3}{c}{Gowalla} & \multicolumn{3}{c}{Yelp} & \multicolumn{3}{c}{Amazon-book}\\
      \midrule
    & New User & New Item	& Over All & New User & New Item	& Over All & New User & New Item	& Over All \\
   \midrule
   INMO-LGCN-retrain & $14.01(\pm 0.42)$	& $16.20(\pm 0.20)$	& $15.41(\pm 0.10)$ & $6.21(\pm 0.07)$	& $13.21(\pm 0.10)$
   & $6.51(\pm 0.02)$	& $14.73(\pm 0.12)$	& $14.91(\pm 0.19)$	& $9.86(\pm 0.03)$	 \\
   \hline
   \hline
   Popular & $1.54(\pm 0.12)$	& $0.91(\pm 0.06)$	& $2.10(\pm 0.03)$ & $1.04(\pm 0.03)$	& $1.91(\pm 0.05)$
   & $1.01(\pm 0.02)$	& $0.55(\pm 0.02)$	& $0.82(\pm 0.03)$	& $0.70(\pm 0.01)$	 \\
   \hline
   Mult-VAE & $10.77(\pm 0.31)$	& $\backslash$	& $12.58(\pm 0.08)$ & $4.93(\pm 0.07)$	& $\backslash$
   & $5.70(\pm 0.05)$	& $\textbf{9.12}(\pm 0.06)$	& $\backslash$	& $7.87(\pm 0.04)$	 \\
   \hline
   IMC-GAE & $8.42(\pm 0.15)$	& $9.25(\pm 0.21)$	& $9.81(\pm 0.08)$ & $2.57(\pm 0.03)$	& $7.54(\pm 0.11)$
   & $3.25(\pm 0.03)$	& $4.91(\pm 0.35)$	& $5.12(\pm 0.15)$	& $4.19(\pm 0.05)$	 \\
   \hline
   IMC-LGCN & $10.38(\pm 0.31)$	& $10.90(\pm 0.09)$	& $13.24(\pm 0.12)$ & $4.67(\pm 0.04)$	& $10.74(\pm 0.08)$
   & $5.57(\pm 0.03)$	& $7.18(\pm 0.19)$	& $7.07(\pm 0.16)$	& $6.39(\pm 0.07)$	 \\
   \hline
   IDCF-LGCN & $8.29(\pm 0.18)$	& $8.60(\pm 0.13)$	& $9.40(\pm 0.09)$ & $3.28(\pm 0.04)$	& $7.77(\pm 0.10)$
   & $3.48(\pm 0.06)$	& $\backslash$	& $\backslash$	& $\backslash$	 \\
   \hline
   \hline
   INMO-MF & $10.85(\pm 0.24)$	& $10.92(\pm 0.14)$	& $13.10(\pm 0.06)$ & $4.85(\pm 0.24)$	& $10.73(\pm 0.25)$
   & $5.50(\pm 0.05)$	& $1.89(\pm 0.66)$	& $0.63(\pm 0.26)$	& $1.56(\pm 0.59)$	 \\
   \hline
   INMO-LGCN & $\textbf{12.36}(\pm 0.38)$	& $\textbf{13.62}(\pm 0.08)$	& $\textbf{14.52}(\pm 0.11)$ & $\textbf{5.75}(\pm 0.08)$	& $\textbf{12.17}(\pm 0.05)$
   & $\textbf{6.13}(\pm 0.02)$	& $9.05(\pm 0.05)$	& $\textbf{7.99}(\pm 0.27)$	& $\textbf{7.94}(\pm 0.07)$	 \\
   \hline
   INMO-LGCN* & $10.95(\pm 0.28)$	& $11.07(\pm 0.10)$	& $13.49(\pm 0.05)$ & $5.56(\pm 0.08)$	& $12.16(\pm 0.07)$
   & $6.10(\pm 0.02)$	& $7.29(\pm 0.16)$	& $6.78(\pm 0.07)$	& $7.20(\pm 0.03)$	 \\
   \bottomrule
  \end{tabular}
  \begin{tablenotes}    
        \footnotesize               
        \item `$\backslash$': Mult-VAE cannot handle the inductive scenario with new items and IDCF cannot apply to large datasets.
      \end{tablenotes}         
    \end{threeparttable}} 
\end{table*}

\subsubsection{New Interactions}
In this scenario, we randomly remove $20\%$ of training interactions from each user, then train the recommender models on the remaining data. 
During the test phase, previously removed interactions arrive as the new interactions, 
which can be further utilized to improve the recommendation performances.

Note that not all of the baseline methods can handle this inductive scenario. 
MF-based methods (MF and NeuMF) cannot make recommendations with new interactions or new users/items. 
We adapt Mult-VAE to this scenario by adding the new interactions to its updated inputs of the encoder.
GNN-based CF methods, i.e., NGCF, LightGCN, and IMC-GAE, take the new interactions into consideration via adding new links in the interaction graph.
As for our INMO, it updates the inductive embeddings of users and items, enhancing the utilization of new interactions.

We present the experimental results in Table \ref{tb:newit}.
The suffix \textbf{-new} indicates the updated results considering additional new interactions in the test phase, otherwise not.
\textbf{INMO-LGCN-retrain} refers to the performances of INMO-LGCN through retraining to incorporate the new interactions, 
served as the performance upper bound.
As shown in Table \ref{tb:newit}, new interactions help to improve the performances for all methods, 
verifying the benefits of modeling additional new interactions in the test phase.
Both INMO-MF and INMO-LGCN significantly outperform their basic versions on all datasets, 
whether adding new interactions or not. 
Especially after adding new interactions, INMO-LGCN-new increases $NDCG@20$ by $7.24\%$, $5.98\%$, $15.64\%$ 
on Gowalla, Yelp, and Amazon-book, compared with LightGCN-new.
The results empirically validate the inductive capability of our INMO by considering the new interactions
in the embedding generation.
% Note that, INMO-LGCN-new approaches or even outperforms INMO-LGCN-Retrain, i.e., the upper bound in the transductive scenario, 
% indicating that our proposed method obtains satisfactory performances in the inductive scenario,  
% comparable to the model performances after retraining while saving a lot of computing resources.

\subsubsection{New Users/Items}
In this scenario, we randomly remove $20\%$ of users and items from the training and validation data,
but the test set keeps the same as the one used in the transductive scenario. This is a common scenario in real-world services, 
which means the methods need to recommend both new items and old items for new users who have not been seen during training.

Since NGCF and LightGCN need to learn the user-specific and item-specific embeddings, 
they can not work on the inductive recommendation scenario with new users and new items.
To better demonstrate the effectiveness of INMO, we adapt LightGCN to this scenario by 
employing the same postprocessing strategy in IMC-GAE \cite{shen2021inductive}, denoted as IMC-LGCN.
As for Mult-VAE, which is intrinsically unable to recommend new items without retraining, 
we evaluate their performances when giving only recommendations with old items. 
In addition, we introduce a non-personalized recommender \textbf{Popular}, which recommends the most popular items for all users, 
as the lower bound of performances.

Table \ref{tb:newui} shows the performance comparison in the new users and new items scenario in terms of $NCDG@20$.
In addition to the overall performances, we report the average $NCDG@20$ of new users and the retrieval results among new items.
INMO-LGCN achieves the best recommendation results among inductive methods, significantly outperforming all the baseline models, 
approaching the upper bound of INMO-LGCN-retrain.
It indicates the proposed INMO is quite effective in generalizing to new users and new items which are unseen during training.         

\begin{figure}[t]
  \centering
  \includegraphics[width=0.99\linewidth]{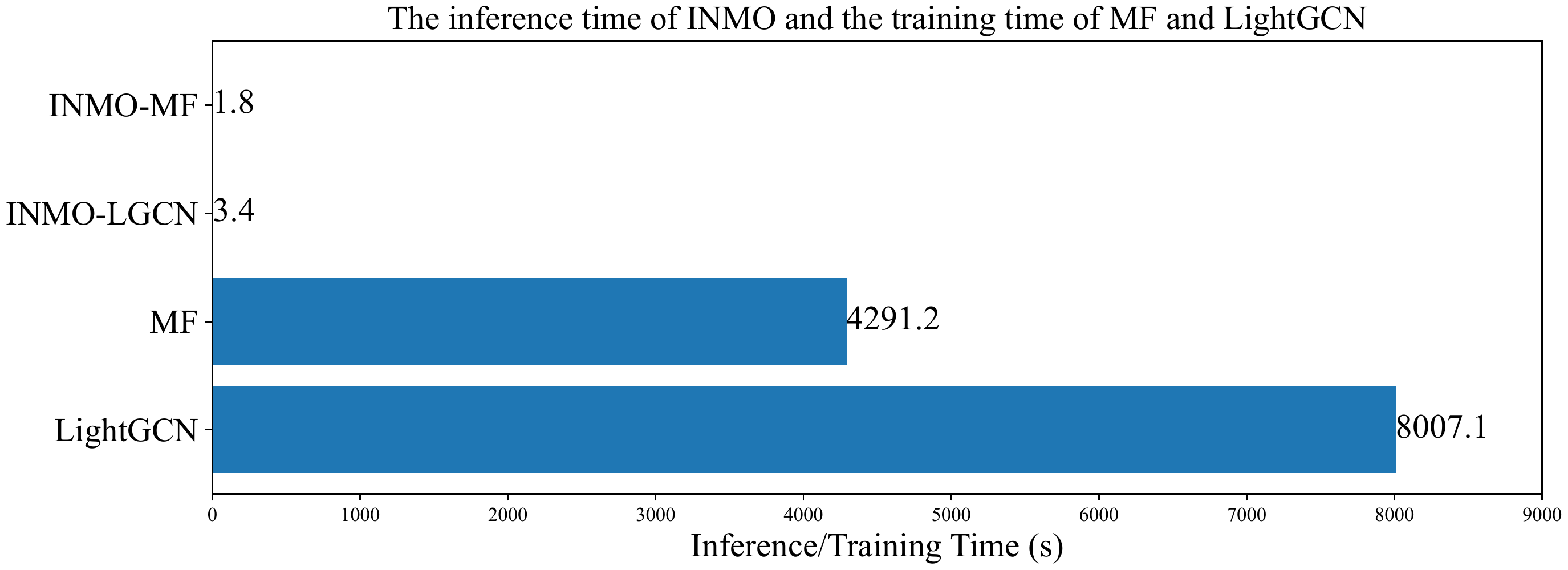}
  \caption{The retraining cost of two representative latent factor models.}
  \Description{A histogram indicating the inference time of INMO and the training time of MF and LightGCN}
  \vspace{-0.2cm}
  \label{fig:time}
\end{figure}

\subsubsection*{Retraining Cost}
INMO can avoid the frequent retraining of recommender models, which is of great value in real-world applications. 
Specifically, the full retraining of the LightGCN model with a Tesla V100 in a small dataset Gowalla still takes more than $2$ hours, while our INMO is
able to inductively recommend for new users and new items with an inference time of only several seconds.
Figure~\ref{fig:time} illustrates the time consumptions with and without INMO when facing new users and new items. 
It is evident that our proposed inductive embedding module can save a lot of computing resources and provide timely and accurate recommendations.

\subsection{Hyper-parameter Analysis (Q3)}
In this section, we conduct experiments to analyze the impact of some hyper-parameters and training techniques. 
%We take the Gowalla dataset as an example for analysis, due to the space limitation,

\begin{figure}[t]
  \centering
  \includegraphics[width=\linewidth]{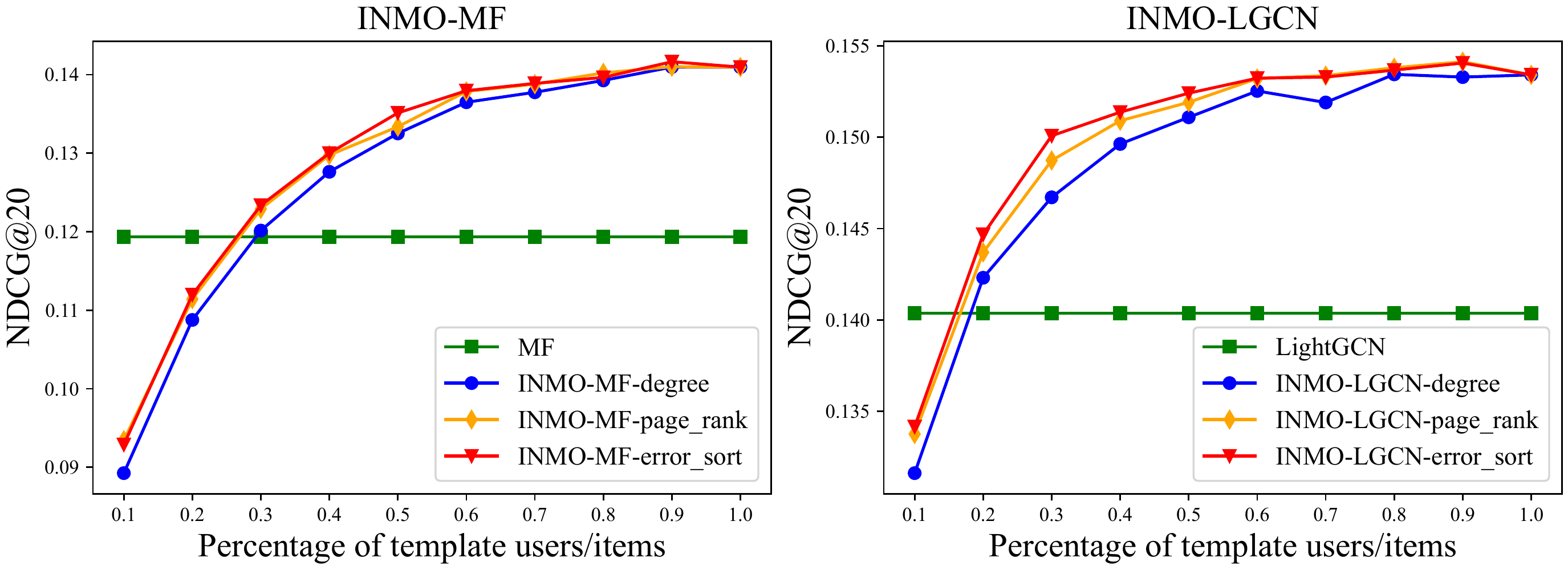}
  \caption{The recommendation performances under different percentages of template users and template items.}
  \Description{Two line charts indicating the influence of varying the percentage of template users and template items.}
%   \vspace{-0.5cm}
  \label{core}
\end{figure}

\subsubsection{The Number of Template Users and Template Items}
\label{sec:tem}
We explore the influence of template users and template items on the recommendation performances, 
and compare various ways to select the template users/items as mentioned in Section~\ref{sec:th}. 
Experiments are conducted for both INMO-MF and INMO-LGCN on Gowalla dataset (shown in Figure~\ref{core}).
It is observed that both INMO-MF and INMO-LGCN can yield better performances than their original versions, while with much fewer parameters. 
Specifically, INMO-MF outperforms MF with only $30\%$ model parameters, and INMO-LGCN beats the state-of-the-art LightGCN with 
only $20\%$ parameters. 
Figure~\ref{core} also empirically validates that the error-sort indicator could guide to select a better set of template users/items, leading to a higher recommendation accuracy.  
These findings demonstrate the effectiveness and scalability of our proposed INMO.

\begin{figure}[t]
  \centering
  \includegraphics[width=0.9\linewidth]{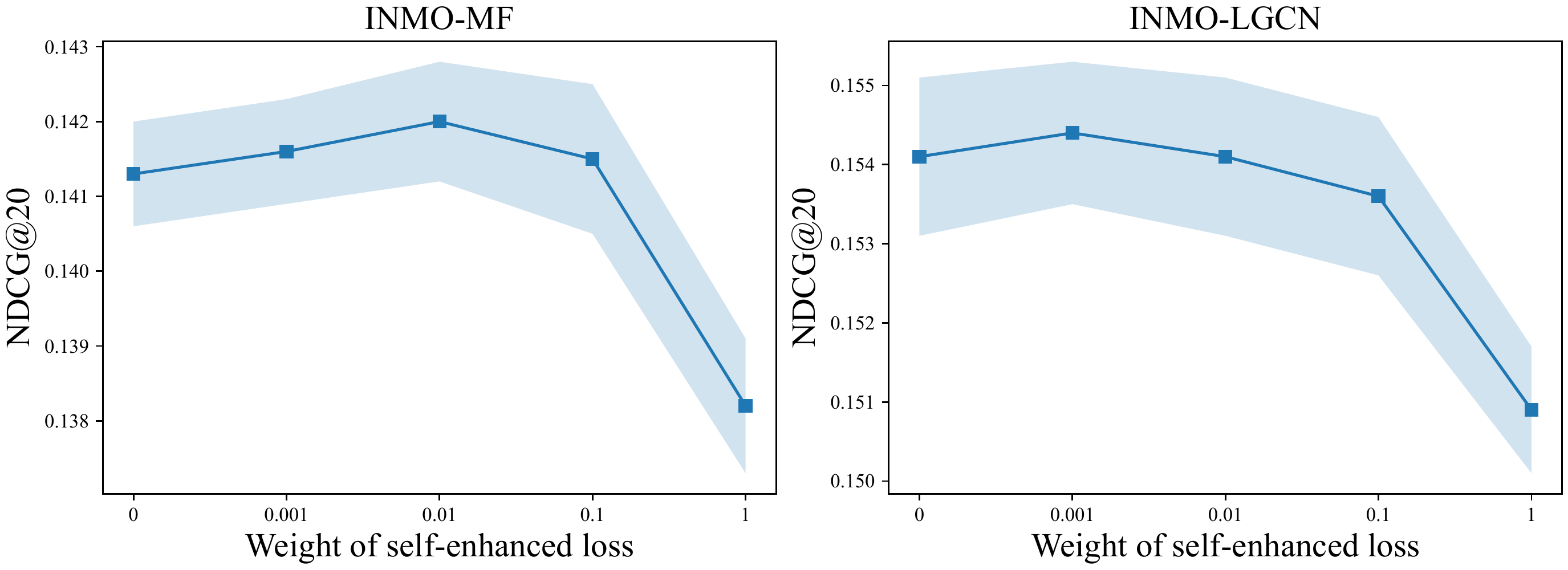}
  \caption{The recommendation performances under different weights of the self-enhanced loss $\beta$.}
  \Description{Two line charts indicating the influence of varing the strength of the self-enhanced loss.}
  \vspace{-0.2cm}
  \label{loss}
\end{figure}

\subsubsection{Self-enhanced loss}
To demonstrate the impact of our proposed self-enhanced loss $\mathcal{L}_{SE}$, we evaluate the recommendation performances 
under different weights of the self-enhanced loss on the Gowalla dataset.
As shown in Figure~\ref{loss}, we find that INMO-MF yields the best performance with a $\beta$ at $0.1$, 
while INMO-LGCN at $0.01$, indicating that INMO-MF needs more additional supervised information. 
These results prove the general effectiveness of $\mathcal{L}_{SE}$ on various backbone models and 
suggest that the strength of the self-enhanced loss should be carefully tuned in different situations. 

\subsubsection{Ablation Study}
\begin{table}[t]
  \caption{The ablation study on two training techniques.}
  \label{tb:ablation}
  \scalebox{0.78}{
  \begin{tabular}{cccccc}
    \toprule
    Method & NA & DI & New User & New Item & Over All  \\
    \midrule
    INMO-MF & $\checkmark$ & $\checkmark$ & $\textbf{10.85}(\pm 0.24)$	& $\textbf{10.92}(\pm 0.14)$	& $\textbf{13.10}(\pm 0.06)$ \\
    \hline
     & $\times$ & $\checkmark$ & $3.29(\pm 0.27)$	& $4.06(\pm 0.29)$	& $5.31(\pm 0.21)$ \\
    \hline
     & $\checkmark$ & $\times$ & $10.45(\pm 0.23)$	& $10.70(\pm 0.12)$	& $12.83(\pm 0.06)$ \\
    \hline
     & $\times$ & $\times$ & $6.44(\pm 0.41)$	& $7.09(\pm 0.38)$	& $9.29(\pm 0.26)$ \\
    \hline
    \hline
    INMO-LGCN & $\checkmark$ & $\checkmark$ & $12.36(\pm 0.38)$	& $\textbf{13.62}(\pm 0.08)$	& $\textbf{14.52}(\pm 0.11)$ \\
    \hline
     & $\times$ & $\checkmark$ & $12.26(\pm 0.32)$	& $13.42(\pm 0.09)$	& $14.37(\pm 0.09)$ \\
    \hline
     & $\checkmark$ & $\times$ & $\textbf{12.38}(\pm 0.39)$	& $13.44(\pm 0.11)$	& $\textbf{14.52}(\pm 0.07)$ \\
    \hline
     & $\times$ & $\times$ & $12.17(\pm 0.35)$	& $13.05(\pm 0.15)$	& $14.26(\pm 0.11)$ \\
    \bottomrule
\end{tabular}}
\end{table}

\begin{figure}[t]
 \setlength{\abovecaptionskip}{-0.002cm}
  \centering
  \includegraphics[width=0.9\linewidth]{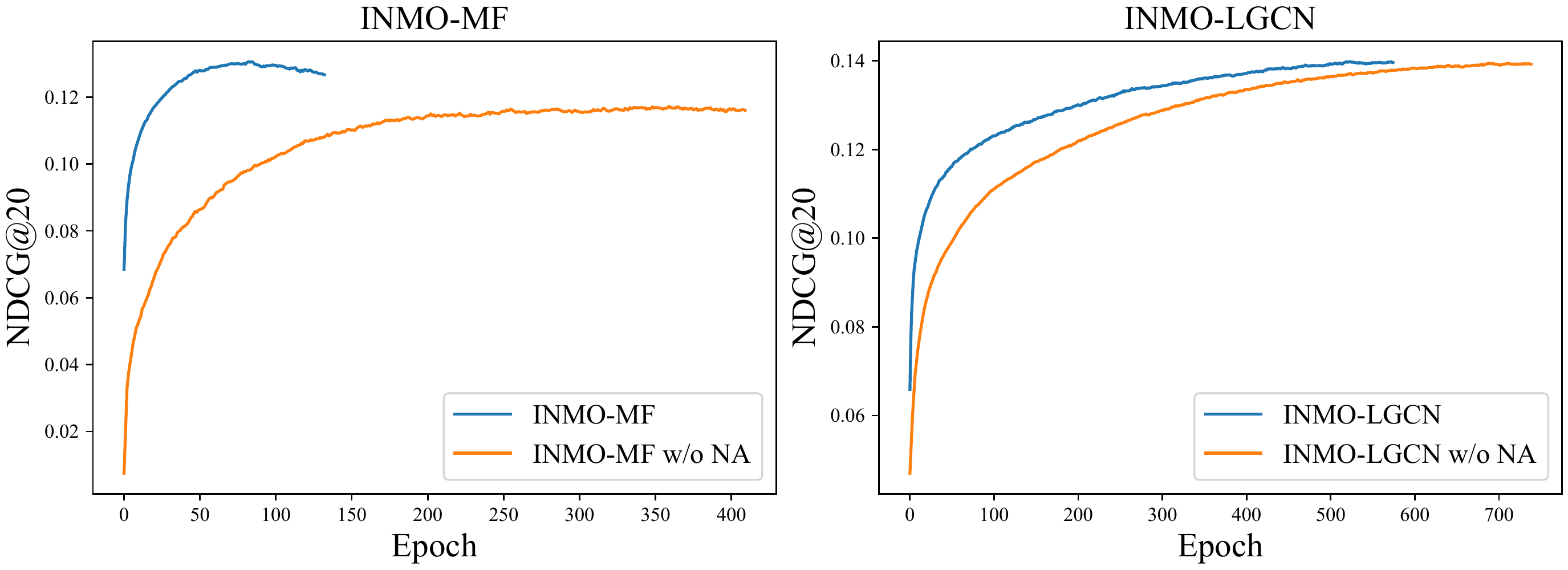}
  \caption{The training procedures with and without the normalization annealing technique.}
  \Description{Two line charts indicating the training procedures with and without normalization annealing.}
  \vspace{-0.2cm}
  \label{fig:na}
\end{figure}

We conduct experiments to evaluate the effectiveness of the training techniques proposed for INMO.
We ablate the normalization annealing (NA) and drop interaction (DI) techniques in INMO-MF and INMO-LGCN, and then evaluate them in the inductive scenario with new users and new items on Gowalla dataset. 
The comprehensive experimental results are presented in Table~\ref{tb:ablation}, which empirically verifies the effectiveness of both the two training techniques, especially for INMO-MF. 
It illustrates that, the optimization procedure for INMO-MF can be significantly improved with our well designed training techniques. It is necessary for INMO to adopt a dynamic normalization strategy and see varying combinations of interactions during training. Specifically, these techniques help INMO-MF  to increase the $NDCG@20$ by $68.48\%$ and $54.02\%$ for new users and new items respectively. 
We realize that the hyper-parameter settings are significantly important for INMO-MF to achieve a competent inductive recommendation performance. In the case of INMO-MF without NA, it yields a passable transductive accuracy, while performing poorly when facing new users and new items.

To further investigate how normalization annealing accelerates the model training, we delve into the training procedures of INMO-MF and IMNO-LGCN with and without this technique.
The $NDCG@20$ on the validation set of Gowalla during training is illustrated in Figure~\ref{fig:na}.
The lines end at different epochs as a result of the early stopping strategy.
We can notice that the variants with normalization annealing converge much faster to the plateau and even achieve better performances, demonstrating the effectiveness of normalization annealing for better model optimization.

\section{Conclusion}
In this work, we propose a novel Inductive Embedding Module, namely INMO, to make recommendations in the inductive scenarios with 
\emph{new interactions} and \emph{new users/items} for collaborative filtering.
INMO generates the inductive embeddings for users and items by considering their past interactions with some template users and template items. 
Remarkably, INMO is model-agnostic and scalable, which is applicable to existing latent factor models and has an adjustable number of parameters.
To demonstrate the effectiveness and generality of our proposed INMO, we attach it to MF and LightGCN and obtain the inductive variants INMO-MF and INMO-LGCN. 
We evaluate INMO on three public real-world benchmarks across both transductive and inductive recommendation scenarios.
Experimental results demonstrate that, INMO-MF* and INMO-LGCN* outperform their original versions even with only $30\%$ of parameters.
Furthermore, INMO-LGCN yields the best performances in all the scenarios.
We hope this work provides some new ideas for researchers to consider the inductive recommendation task, which is a common scenario in real-world services.
%%
%% The acknowledgments section is defined using the "acks" environment
%% (and NOT an unnumbered section). This ensures the proper
%% identification of the section in the article metadata, and the
%% consistent spelling of the heading.
\begin{acks}
This work is funded by the National Natural Science Foundation of China under Grant Nos. 62102402, U21B2046, and the National Key R\&D Program of China (2020AAA0105200). Huawei Shen is also supported by Beijing Academy of Artificial Intelligence (BAAI).
\end{acks}
%%
%% The next two lines define the bibliography style to be used, and
%% the bibliography file.
\bibliographystyle{ACM-Reference-Format}
\bibliography{sample-base}

%%
%% If your work has an appendix, this is the place to put it.

\end{document}